\renewcommand{\title}[1]{

\begin{center} \Large \bf #1 \end{center}
}
\renewcommand{\author}[2]{
 \begin{center} #1  \vspace{3mm} \\
  #2 \\
%  {\small E-mail: \texttt{#3}}
 \end{center}
\addvspace{\baselineskip}
}
\newtheorem{theorem}{Theorem}[section]
\newtheorem{proposition}[theorem]{Proposition}
\newtheorem{lemma}[theorem]{Lemma}
\theoremstyle{definition}
\theoremstyle{remark}
\newtheorem*{fact}{Fact}
\begin{document}
\baselineskip 5mm
\title{Exact solution of the $\Phi_{2}^{3}$ finite matrix model}
\author{${}^1$ Naoyuki Kanomata and~ ${}^1$ Akifumi Sako}
{
${}^1$  Tokyo University of Science,\\ 1-3 Kagurazaka, Shinjuku-ku, Tokyo, 162-8601, Japan
}
\noindent

\vspace{1cm}

\abstract{\vspace{1mm}

We find the exact solutions of the $\Phi_{2}^{3}$ finite matrix model (Grosse-Wulkenhaar model). 
In the $\Phi_{2}^{3}$ finite matrix model, multipoint correlation functions are expressed as $G_{|a_{1}^{1}\ldots a_{N_{1}}^{1}|\ldots|a_{1}^{B}\ldots a_{N_{B}}^{B}|}$.
The $\displaystyle \sum_{i=1}^{B}N_{i}$-point function denoted by $G_{|a_{1}^{1}\ldots a_{N_{1}}^{1}|\ldots|a_{1}^{B}\ldots a_{N_{B}}^{B}|}$ is given by the sum over all Feynman diagrams (ribbon graphs) on Riemann surfaces with $B$-boundaries, and each $|a^{i}_{1}\cdots a^{i}_{N_{i}}|$ corresponds to the Feynman diagrams having $N_{i}$-external lines from the $i$-th boundary. 
It is known that any $G_{|a_{1}^{1}\ldots a_{N_{1}}^{1}|\ldots|a_{1}^{B}\ldots a_{N_{B}}^{B}|}$ can be expressed using $G_{|a^{1}|\ldots|a^{n}|}$ type $n$-point functions. 
Thus we focus on rigorous calculations of $G_{|a^{1}|\ldots|a^{n}|}$. 
The formula for $G_{|a^{1}|\ldots|a^{n}|}$ is obtained, and it is achieved by using the partition function $\mathcal{Z}[J]$ calculated by the Harish-Chandra-Itzykson-Zuber integral. 
We give $G_{|a|}$, $G_{|ab|}$, $G_{|a|b|}$, and $G_{|a|b|c|}$ as the specific simple examples. All of them are described by using the Airy functions.
}

\section{Introduction}

Matrix models were well studied in the 1980s and 1990s in the context of non-critical string theories\cite{DiFrancesco:1993cyw}. 
Perturbative expansions of the matrix models can be interpreted by using random simplicial decompositions of two-dimensional surfaces.
Each Feynman diagram in perturbative expansions of path integrals represents a corresponding simplicial decomposition of a two-dimensional surface. 
In particular, Feynman diagrams of $\Phi^{3}$ matrix models can be regarded as triangulations of two-dimensional surfaces.
The sum over two-dimensional surfaces corresponds to path integrals of two-dimensional quantum gravity theories\cite{DiFrancesco:1993cyw},\cite{Gross:1989aw},\cite{1991}. 
$1/N$ expansions of the matrix models of size $N$ are equivalent to the genus expansion of the two-dimensional surface\cite{Gross:1989aw}. 
The double scaling limit is a limit that incorporates contributions from any genus $g$ surfaces\cite{DiFrancesco:1993cyw},\cite{Gross:1989aw},\cite{Brezin:1990rb},\cite{Douglas:1989ve},\cite{1991}. 
Calculating the genus expansion of the two-dimensional surfaces in this limit gives a fully non-perturbative solution for the two-dimensional gravity theory\cite{1991}.\bigskip

Fukuma, Kawai, and Nakayama proved that the Virasoro constraint condition is equivalent to the condition that the solution of the KdV hierarchy satisfies the string equation\cite{Fukuma:1990jw},\cite{Zhou:2013kka}. 
Dijkgraaf, Verlinde, and Verlinde also derived the similar results independently\cite{Dijkgraaf:1990rs}.
Witten showed that the Witten-Kontsevich $\tau$-function satisfies the string equation\cite{Zhou:2013kka},\cite{Witten:1990hr}. 
Furthermore, Witten conjectured that the Witten-Kontsevich $\tau$-function is the $\tau$ function of the KdV hierarchy\cite{Zhou:2013kka},\cite{Witten:1990hr}. 
Using the $\Phi^{3}$ matrix model, the proof of this conjecture was done by Kontsevich\cite{Kontsevich:1992ti},\cite{Zhou:2013kka}.\bigskip

We focus on the Grosse-Wulkenhaar $\Phi^{3}$ model\cite{Grosse:2016pob},\cite{Grosse:2016qmk},\cite{Hock:2020rje} (the Kontsevich's $\Phi^{3}$ matrix model).
We refer to the Grosse-Wulkenhaar $\Phi^{3}$ model simply as the $\Phi^{3}$ model in the following.
Quantum field theories on noncommutative spaces such as Moyal spaces have given a new perspective to matrix models.
The problem of UV/IR-mixing generally arises when considering the quantum field theories on noncommutative spaces.
Grosse and Wulkenhaar modified the $\Phi^{4}$ theory on the four-dimensional Moyal space by adding harmonic oscillator potentials to the action\cite{Grosse:2004yu}.
Grosse and Wulkenhaar also proved that this model is renormalizable at all orders of perturbation theories\cite{Grosse:2012uv}.
The $\Phi^{3}$ matrix model on Moyal spaces was studied by Grosse and Steinacker\cite{Grosse:2005ig},\cite{Grosse:2006qv},\cite{Grosse:2006tc}.
In particular, the $n$-point functions of the $\Phi^{3}$ matrix model in the large $N, V$ limit were calculated in the previous studies\cite{Grosse:2016pob},\cite{Grosse:2016qmk} by Grosse, Wulkenhaar, and one of the authors of this paper.
Any $n$-point function of the $\Phi^{3}$ matrix model was calculated by solving the Schwinger-Dyson equation exactly by using the Ward-Takahashi identity.
The Swiss cheese limit picks up only the Genus $0$ contribution while preserving the boundary components.
The Schwinger-Dyson equation obtained in this limit is the integral equation corresponding to the Riemann-Hilbert problem.
The Schwinger-Dyson equation for the $1$ point function in the Swiss cheese limit coincides with the one in Makkeenko-Semenoff\cite{Makeenko:1991ec}.
Using the hierarchy of the Schwinger-Dyson equation and this exact solution, any $n$-point function in the large $N, V$ limit was obtained\cite{Grosse:2016pob},\cite{Grosse:2016qmk}.
Afterward, the planar $2$-point function of the Grosse-Wulkenhaar type $\Phi^{4}$ model in large $N,V$ limit was solved exactly by Grosse, Wulkenhaar, and Hock\cite{Grosse:2019jnv}, $n$-point functions were solved by Wulkenhaar, and Hock\cite{Hock:2021tbl}.
Wulkenhaar, Branahl, and Hock found blobbed topological recursion of the $\Phi^{4}$ model\cite{Branahl:2020yru},\cite{Borot:2015hna}. \bigskip

In this paper, we find exact solutions of the $\Phi^{3}$ finite matrix model. 
In calculating the partition function $\mathcal{Z}[J]$, the integration of the off-diagonal elements of the Hermitian matrix is calculated using the Harish-Chandra-Itzykson-Zuber integral\cite{Itzykson:1979fi},\cite{T.Tao},\cite{Harish-Chandra:1957dhy},\cite{Zinn-Justin:2002rai}.
On the other hand, the integral of the diagonal elements of the Hermitian matrix is done by using the Airy functions.
We use the result to calculate $G_{|a^{1}|\ldots|a^{n}|}$ type $n$-point functions. 
The exact solutions of $G_{|a^{1}|\ldots|a^{n}|}$ type $n$-point functions can be obtained by calculating the n-th derivative $\partial^{n}/\partial J_{a^{1}a^{1}}\cdots\partial J_{a^{n}a^{n}}$ of  $\log\mathcal{Z}[J]$ with the external field $J$ as a diagonal matrix.
We apply the formula from the previous study \cite{Hardy} to this formula.
As a result, we succeed to find the exact solutions for $G_{|a^{1}|\ldots|a^{n}|}$ type $n$-point functions.
Since $n$-point functions $G_{|a_{1}^{1}\ldots a_{N_{1}}^{1}|\ldots|a_{1}^{B}\ldots a_{N_{B}}^{B}|}$ can be expressed by using  $G_{|a^{1}|\ldots|a^{n}|}$ type $n$-point functions, we can obtain all the exact solutions of the $\Phi^{3}$ finite matrix model. 
We also give $G_{|a|}$, $G_{|ab|}$, $G_{|a|b|}$, and $G_{|a|b|c|}$ as concrete functions as examples.\bigskip

This paper is organized as follows.
In Section \ref{sec2}, we review the $\Phi^3$ matrix model.
In Section \ref{sec3}, we carry out the path integral of the partition function $\mathcal{Z}[J]$. 
In Section \ref{sec4}, using the result in Section \ref{sec3}, we calculate the exact solutions of $G_{|a|}$, and $G_{|ab|}$.
In Section \ref{sec5}, we derive the exact solutions of the $n$-point functions. 
In Section \ref{sec6}, $G_{|a|b|}$ is given as the simple examples.
In Section \ref{sec7}, we summarize this paper.

%%%%%%%%%%%%%%%%%%%%%%%%%%%%%%%%%%%%%%%%%%%%%%%%%%%%%%%%%%%%%%%%%%%%%%%%%%%%%%%%%%%%%%%%%

\section{Setup of $\Phi_{2}^{3}$ Matrix Model}\label{sec2}
In this section, we review the $\Phi_{2}^{3}$ matrix model based on the previous studies\cite{Grosse:2016pob},\cite{Grosse:2016qmk},\cite{Hock:2020rje}, and we determine the notation in this paper.

Let $\Phi=(\Phi_{ij})$ be a Hermitian matrix for $i,j=1,2,\ldots,N$ and $E_{m-1}$ be a discretization of a monotonously increasing differentiable function $e$ with $e(0)=0$, 
\begin{align}
E_{m-1}=&\mu^{2}\left(\frac{1}{2}+e\left(\frac{m-1}{\mu^{2}V}\right)\right),
\end{align}
where $\mu^{2}$ is a squared mass and $V$ is a real constant. Let $E=(E_{m-1}\delta_{mn})$ be a diagonal matrix for $m,n=1,\ldots,N$. Let us consider the following action:
\begin{align}
S[\Phi]=iV\mathrm{tr}\left(E\Phi^{2}+\kappa\Phi+\frac{\lambda}{3}\Phi^{3}\right),
\end{align}
where  $\kappa$ is a renormalization constant (real), $\lambda$ is a coupling constant that is non-zero real, and $i=\sqrt{-1}$. Compared to the paper\cite{Grosse:2016pob}, the difference is that $V$ is replaced with $iV$. By the diagonal matrix $E$ that is not proportional to the unit matrix in general, there is no symmetry for the unitary transformation in $\Phi\rightarrow U\Phi U^{*}$. Here $U$ is a unitary matrix, and $U^{*}$ is its Hermitian conjugate. 

Let $J=(J_{mn})$ be a Hermitian matrix for $m,n=1,\ldots,N$ as an external field. Let $\mathcal{D}\Phi$ be the integral measure, 
\begin{align}
\displaystyle\mathcal{D}\Phi:=&\prod_{i=1}^{N}d\Phi_{ii}\prod_{1\leq i<j\leq N}d\mathrm{Re}\Phi_{ij}d\mathrm{Im}\Phi_{ij},
\end{align}   
where each variable is divided into real and imaginary parts $\Phi_{ij}=\mathrm{Re}\Phi_{ij}+i\mathrm{Im}\Phi_{ij}$ with $\mathrm{Re}\Phi_{ij}=\mathrm{Re}\Phi_{ji}$ and $\mathrm{Im}\Phi_{ij}=-\mathrm{Im}\Phi_{ji}$. Let us consider the following partition function:
\begin{align}
\mathcal{Z}[J]:=&\int\mathcal{D}\Phi\exp\left(-S[\Phi]+iV\mathrm{tr}(J\Phi)\right)\notag\\
=&\int \mathcal{D}\Phi \exp\left(-iV\mathrm{tr}\left(E\Phi^2+\kappa\Phi+\frac{\lambda}{3}\Phi^3\right)\right)\exp\left(iV\mathrm{tr}\left(J\Phi\right)\right).\label{Z[J]}
\end{align}

Using $\displaystyle\log\frac{ \mathcal{Z}[J]}{\mathcal{Z}[0]}$, the $\displaystyle \sum_{i=1}^{B}N_{i}$-point function $G_{|a_{1}^{1}\ldots a_{N_{1}}^{1}|\ldots|a_{1}^{B}\ldots a_{N_{B}}^{B}|}$ is defiend as
\begin{align}
\log\frac{ \mathcal{Z}[J]}{\mathcal{Z}[0]}
:=\sum_{B=1}^\infty \sum_{1\leq N_1 \leq \dots \leq
  N_B}^\infty
\sum_{p_1^1,\dots,p^B_{N_B} =0}^{\mathcal{N}} \!\!\!\!
(iV)^{2-B}
&\frac{G_{|p_1^1\dots p_{N_1}^1|\dots|p_1^B\dots p^B_{N_B}|}
}{S_{(N_1,\dots ,N_B)}}
\prod_{\beta=1}^B \frac{\mathbb{J}_{p_1^\beta\dots
    p^\beta_{N_\beta}}}{N_\beta},
\label{logZ}
\end{align}
where $N_{i}$ is the identical valence number for $i=1,\ldots,B$, $\displaystyle\mathbb{J}_{p_1\dots p_{N_{i}}}:=\prod_{j=1}^{N_{i}} J_{p_jp_{j+1}}$ with $N_{i}+1\equiv 1$, $(N_1,\dots,N_B)=(\underbrace{N'_1,\dots,N'_1}_{\nu_1},\dots,
\underbrace{N'_s,\dots,N'_s}_{\nu_s})$, and $\displaystyle S_{(N_1,\dots ,N_B)}=\prod_{\beta=1}^{s} \nu_{\beta}!$.

%%%%%%%%%%%%%%%%%%%%%%%%%%%%%%%%%%%%%%%%%%%%%%%%%%%%%%%%%%%%%%%%%%%%%%%%%%%%%%%%%%%%%%%%%%%%%%%%%%%

\section{Calculation of Partition Function $\mathcal{Z}[J]$}\label{sec3}
In this section, we perform the integration of the partition function $\mathcal{Z}[J]$ by dividing the Hermitian matrix into its diagonal and off-diagonal elements. The off-diagonal elements of the Hermitian matrix in the partition function $\mathcal{Z}[J]$ are integrated using the Harish-Chandra-Izykson-Zuber integral \cite{Itzykson:1979fi},\cite{T.Tao},\cite{Harish-Chandra:1957dhy},\cite{Zinn-Justin:2002rai}, and the integration of the diagonal elements of the Hermitian matrix in the partition function $\mathcal{Z}[J]$ is performed by using Airy functions. The calculations are essentially in the line with the calculations of Kontsevich \cite{Kontsevich:1992ti}. We write without omitting details because the results are different due to the presence of external fields $J$ and the renormalization term $\kappa$.\bigskip

We introduce a Hermitian matrix $\displaystyle\widetilde{E}=(\widetilde{E_{m}}\delta_{mn})=\frac{1}{\lambda}E=\left(\frac{E_{m-1}}{\lambda}\delta_{mn}\right)$ for $m,n=1,\ldots,N$ and $\displaystyle\frac{\kappa}{\lambda}=\widetilde{\kappa}$. Note that the indices are shifted i.e. $\widetilde{E}=diag\left(\widetilde{E}_{1},\cdots,\widetilde{E}_{N}\right)$ and $E=diag\left(E_{0},\cdots,E_{N-1}\right)$. Then $\mathcal{Z}[J]$ is written as
\begin{align}
\mathcal{Z}[J]=&\int \mathcal{D}\Phi \exp\left(-2i\lambda V\mathrm{tr}\left(\frac{\widetilde{E}\Phi^2}{2}+\frac{\widetilde{\kappa}\Phi}{2}+\frac{1}{3!}\Phi^3\right)\right)\exp\left(iV\mathrm{tr}\left(J\Phi\right)\right).
\end{align}
We introduce a new variable $X$ by $\Phi=X-\widetilde{E}$. Here $X=(X_{mn})$ is a Hermitian matrix, too. We do a change of variables of the integral measure $\mathcal{D}\Phi$ as 
\begin{align}
d\Phi_{ij}=&\sum_{m,n=1}^{N}\frac{\partial\Phi_{ij}}{\partial X_{mn}}dX_{mn}=dX_{ij}.
\end{align}
By the variable transformation $\displaystyle\mathrm{tr}\left(\frac{\widetilde{E}\Phi^2}{2}+\frac{\widetilde{\kappa}\Phi}{2}+\frac{1}{3!}\Phi^3\right)$ is
\begin{align}
\mathrm{tr}\left(\frac{\widetilde{E}\Phi^2}{2}+\frac{\widetilde{\kappa}\Phi}{2}+\frac{1}{3!}\Phi^3\right)=&\mathrm{tr}\left(\frac{(X)^3-3(\widetilde{E})^2X+2(\widetilde{E})^3+3\widetilde{\kappa}X-3\widetilde{\kappa}\widetilde{E}}{6}\right),
\end{align}
then $\mathcal{Z}[J]$ is given as
\begin{align}
\mathcal{Z}[J]=&\int \mathcal{D}X\exp\left(-2i\lambda V\mathrm{tr}\left(\frac{(X)^3-3(\widetilde{E})^2X+2(\widetilde{E})^3+3\widetilde{\kappa}X-3\widetilde{\kappa}\widetilde{E}}{6}\right)\right)\exp\left(iV\mathrm{tr}\left(JX-J\widetilde{E}\right)\right)\notag\\
=&\exp\left(-i\lambda Vtr\left(\frac{2}{3}(\widetilde{E})^{3}-\widetilde{\kappa}\widetilde{E}+\frac{1}{\lambda}J\widetilde{E}\right)\right)\int\mathcal{D}X\exp\left(+i\lambda Vtr\left(-\widetilde{\kappa}X+\frac{1}{\lambda}JX-\frac{1}{3}X^{3}+(\widetilde{E})^{2}X\right)\right)\notag\\
=&\exp\left(-i\lambda Vtr\left(\frac{2}{3}(\widetilde{E})^{3}-\widetilde{\kappa}\widetilde{E}+\frac{1}{\lambda}J\widetilde{E}\right)\right)\int\mathcal{D}X\exp\left(-i\frac{\lambda  V}{3}\mathrm{tr}(X^3)\right)\exp\left(i\lambda V\widetilde{\kappa}\mathrm{tr}\{(M-I+K)X\}\right).\label{XY}
\end{align}
Here  $M=\displaystyle\frac{(\widetilde{E})^2}{\widetilde{\kappa}}=\frac{E^{2}}{\lambda\kappa}$, $K=\displaystyle\frac{J}{\kappa}$, and $I$ is the unit matrix. Note that 
\begin{align*}
\mathcal{D}X=&\left(\prod_{i=1}^{N}dx_{i}\right)\left(\prod_{1\leq k<l\leq N}(x_{l}-x_{k})^{2}\right)dU,
\end{align*}
where $x_{i}$ is the eigenvalues of $X$ for $i=1,\cdots,N$, $dU$ is the Haar probability measure of the unitary group $U(N)$, and $U$ is the unitary matrix which diagonalize $X$\cite{M. L. Mehta}\cite{Eynard:2015aea}. Then (\ref{XY}) can be rewritten as the following:
\begin{align}
\mathcal{Z}[J]=&\exp\left(-i\lambda Vtr\left(\frac{2}{3}(\widetilde{E})^{3}-\widetilde{\kappa}\widetilde{E}+\frac{1}{\lambda}J\widetilde{E}\right)\right)\notag\\
&\int\left(\prod_{i=1}^{\mathrm{N}}dx_{i}\exp\left(-i\frac{\lambda  V}{3}x^{3}_{i}\right)\right)\left(\prod_{1\leq k<l\leq N}(x_{l}-x_{k})^{2}\right)\int dU\exp\left(i\lambda V\widetilde{\kappa}\mathrm{tr}\{(M-I+K)U\widetilde{X}U^{*}\}\right),\label{YX}
\end{align}
where $\widetilde{X}$ is the diagonal matrix $\widetilde{X}=U^{*}XU$. We use the following formula.

\begin{fact}

The Harish-Chandra-Izykson-Zuber integral \cite{Itzykson:1979fi},\cite{T.Tao},\cite{Harish-Chandra:1957dhy},\cite{Zinn-Justin:2002rai} for the unitary group $U(n)$ is
\begin{align}
\int_{U(n)}\exp\left(t\mathrm{tr}\left(AUBU^{*}\right)\right)dU=&c_{n}\frac{\displaystyle\det_{1\leq i,j\leq n}\left(\exp\left(t\lambda_{i}(A)\lambda_{j}(B)\right)\right)}{t^{\frac{(n^{2}-n)}{2}}\displaystyle\Delta(\lambda(A))\displaystyle\Delta(\lambda(B))}.\label{Itzykson}
\end{align}
Here $A=(A_{ij})$, and $B=(B_{ij})$ are some Hermitian matrices whose eigenvalues denoted by $\lambda_{i}(A)$ and $\lambda_{i}(B)$ $(i=1,\cdots,n)$, respectively. $t$ is the non-zero complex parameter, $\displaystyle\Delta(\lambda(A)):=\prod_{1\leq i<j\leq n}(\lambda_{j}(A)-\lambda_{i}(A))$ is the Vandermonde determinant, and $\displaystyle c_{n}:=\left(\prod_{i=1}^{n-1}i!\right)\times\pi^{\frac{n(n-1)}{2}}$ is the constant. $\left(\exp\left(t\lambda_{i}(A)\lambda_{j}(B)\right)\right)$ is the $n\times n$ matrix with the $i$-th row and the $j$-th column being $\exp\left(t\lambda_{i}(A)\lambda_{j}(B)\right)$.

\end{fact}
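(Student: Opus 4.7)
The plan is to follow the classical Itzykson--Zuber heat-equation derivation: the left-hand side, viewed as a function of the entries of the Hermitian matrix $A$, satisfies a linear second-order PDE whose unique doubly antisymmetric solution (up to a multiplicative constant) is the right-hand side. By unitary invariance of the Haar measure I would first substitute $A\to VAV^{*}$ to reduce to the case in which $A$ and $B$ are both diagonal, so that the integral depends only on the eigenvalues $a_{i},b_{j}$.

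Next I would apply the flat Laplacian $\Delta_{A}$ on the space of Hermitian matrices (in coordinates that make $\mathrm{tr}(A^{2})$ the standard quadratic form) to $\exp(t\,\mathrm{tr}(AUBU^{*}))$. Since $\Delta_{A}\exp(t\,\mathrm{tr}(AM))=t^{2}\,\mathrm{tr}(M^{2})\exp(t\,\mathrm{tr}(AM))$ for any fixed Hermitian $M$, and $\mathrm{tr}((UBU^{*})^{2})=\mathrm{tr}(B^{2})$ is independent of $U$, the integral $I(A,B,t):=\int_{U(n)}e^{t\,\mathrm{tr}(AUBU^{*})}\,dU$ satisfies $\Delta_{A}I=t^{2}\,\mathrm{tr}(B^{2})\,I$. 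The standard radial reduction $\Delta_{A}f(a)=\frac{1}{\Delta(\lambda(A))}\sum_{i}\partial_{a_{i}}^{2}\bigl(\Delta(\lambda(A))f(a)\bigr)$ then gives, for $\widetilde{I}:=\Delta(\lambda(A))\Delta(\lambda(B))\,I$, the symmetric heat equation $\sum_{i}\partial_{a_{i}}^{2}\widetilde{I}=t^{2}\,\mathrm{tr}(B^{2})\,\widetilde{I}$, and analogously in the $b_{j}$'s. Because $I$ is a symmetric function of the eigenvalues of $A$ (and of $B$), while $\Delta(\lambda(A))$ and $\Delta(\lambda(B))$ are antisymmetric, $\widetilde{I}$ is antisymmetric in each of the families $(a_{i})$ and $(b_{j})$.

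A direct computation shows that $\det_{1\le i,j\le n}(e^{ta_{i}b_{j}})$ is also doubly antisymmetric and solves the same PDE: expanding the determinant as $\sum_{\sigma}\mathrm{sgn}(\sigma)\prod_{k}e^{ta_{k}b_{\sigma(k)}}$ and applying $\sum_{i}\partial_{a_{i}}^{2}$ produces a factor $t^{2}\sum_{i}b_{\sigma(i)}^{2}=t^{2}\,\mathrm{tr}(B^{2})$ that is independent of the permutation $\sigma$ and factors out. To conclude, I would form the ratio $\widetilde{I}/\det(e^{ta_{i}b_{j}})$: it is symmetric in both variable families (a quotient of doubly antisymmetric analytic functions, hence the Vandermonde zeros cancel), solves the associated homogeneous equation, and is bounded in $(a,b)$ by compactness of $U(n)$; a standard Liouville-type argument then forces it to be a function $C(t,n)$ of $t$ and $n$ only.

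Finally I would pin down $C(t,n)=c_{n}/t^{n(n-1)/2}$ by a single normalization computation. Expanding $\det(e^{ta_{i}b_{j}})$ to lowest nontrivial order in $t$ one recovers the Cauchy--Binet-type leading term $t^{n(n-1)/2}\,\Delta(\lambda(A))\,\Delta(\lambda(B))\,\prod_{k=0}^{n-1}(k!)^{-1}$, and matching against the $t\to 0$ value of the left-hand side (namely $\int_{U(n)}dU$) fixes $c_{n}$; the factor $\pi^{n(n-1)/2}$ reflects the non-probability normalization of $dU$ adopted in the statement. The main obstacle in this plan is the uniqueness step: one must carefully justify that the symmetric bounded solution of the homogeneous heat equation is constant in $(a,b)$, which requires either the compactness bound on $I$ or an explicit power-series argument ruling out further doubly antisymmetric solutions of the prescribed growth.
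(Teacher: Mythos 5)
The paper does not actually prove this statement: it is imported as a \emph{Fact} with citations to Harish-Chandra, Itzykson--Zuber, Tao and Zinn-Justin--Zuber, so there is no internal proof to compare against. Your sketch is the classical Itzykson--Zuber heat-equation derivation, and most of its steps are sound: the reduction to diagonal $A,B$, the identity $\Delta_{A}e^{t\,\mathrm{tr}(AM)}=t^{2}\mathrm{tr}(M^{2})e^{t\,\mathrm{tr}(AM)}$, the radial reduction $\Delta_{A}f=\Delta(\lambda(A))^{-1}\sum_{i}\partial_{a_{i}}^{2}(\Delta(\lambda(A))f)$ (which silently uses harmonicity of the Vandermonde --- worth stating), the double antisymmetry of $\widetilde{I}$ and of $\det(e^{ta_{i}b_{j}})$, and the lowest-order expansion $\det(e^{ta_{i}b_{j}})=t^{n(n-1)/2}\Delta(\lambda(A))\Delta(\lambda(B))\prod_{k=0}^{n-1}(k!)^{-1}+O(t^{n(n-1)/2+1})$ used for the normalization.

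The genuine gap is the uniqueness step, which you yourself flag. The ratio $R=\widetilde{I}/\det(e^{ta_{i}b_{j}})$ does \emph{not} ``solve the associated homogeneous equation'': if $\widetilde{I}=R\,D$ and both $\widetilde{I}$ and $D$ satisfy $\sum_{i}\partial_{a_{i}}^{2}u=t^{2}\mathrm{tr}(B^{2})u$, then $R$ only satisfies $\sum_{i}(D\,\partial_{a_{i}}^{2}R+2\,\partial_{a_{i}}R\,\partial_{a_{i}}D)=0$, from which no Liouville-type conclusion follows directly; moreover boundedness of $I$ gives no bound on $R$ near the zero set of $D$, and for complex $t$ (which is exactly the case $t=i\lambda V\widetilde{\kappa}$ needed in Section 3) the determinant $D$ vanishes off the Vandermonde locus, so $R$ is not even obviously analytic there. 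The rigorous completions of this argument in the cited literature proceed differently: either by solving the genuine heat equation $\partial_{s}u=\Delta u$ with delta initial data and identifying both sides through the $s\to 0$ asymptotics, or by Duistermaat--Heckman exact stationary phase, or by the character (Schur-function) expansion of $e^{t\,\mathrm{tr}(AUBU^{*})}$ integrated with Schur orthogonality --- the last being the cleanest way to extend to all complex $t\neq 0$. One further caveat: your normalization computation, matching the $t\to 0$ limit against $\int_{U(n)}dU=1$ for the Haar \emph{probability} measure, yields $c_{n}=\prod_{i=1}^{n-1}i!$ with no factor $\pi^{n(n-1)/2}$; the extra $\pi$ power in the statement indeed presupposes a non-probability normalization of $dU$, which sits uneasily with the paper's later description of $dU$ as the Haar probability measure, so your remark on this point is apt but should be made explicit as a convention-dependence of $c_{n}$ rather than derived.
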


Applying the Harish-Chandra-Izykson-Zuber integral (\ref{Itzykson}) to $\displaystyle\int dU\exp\left(i\lambda V\widetilde{\kappa}\mathrm{tr}\{(M-I+K)U\widetilde{X}U^{*}\}\right)$ in (\ref{YX}), the result is
\begin{align}
\displaystyle\int dU\exp\left(i\lambda V\widetilde{\kappa}\mathrm{tr}\{(\mathrm{M}-\mathrm{I}+\mathrm{K})U\widetilde{X}U^{*}\}\right)=&\frac{C}{N!}\frac{\displaystyle\det_{1\leq i,j\leq N}\exp\left(i\lambda V\widetilde{\kappa} x_{i}s_{j}\right)}{\displaystyle\prod_{i<j}(x_{j}-x_{i})\prod_{i<j}(s_{j}-s_{i})},
\end{align}
where $s_{t}$ is the eigenvalues of the matrix $M-I+K$ for $t=1,\cdots,N$ and $\displaystyle C=\left(\prod_{p=1}^{N}p!\right)\times\left(\frac{\pi}{i\lambda V\widetilde{\kappa}}\right)^{\frac{N(N-1)}{2}}$. $\left(\exp\left(i\lambda V\widetilde{\kappa} x_{i}s_{j}\right)\right)$ denotes the $N\times N$ matrix with the $i$-th row and the $j$-th column being $\exp\left(i\lambda V\widetilde{\kappa} x_{i}s_{j}\right)$. Then the partition function $\mathcal{Z}[J]$ is described as
\begin{align}
\mathcal{Z}[J]=&\frac{C}{N!}\exp\left(-i\lambda Vtr\left(\frac{2}{3}(\widetilde{E})^{3}-\widetilde{\kappa}\widetilde{E}+\frac{1}{\lambda}J\widetilde{E}\right)\right)\frac{1}{\displaystyle\prod_{1\leq t<u\leq N}(s_{u}-s_{t})}\notag\\
&\int\left(\prod_{i=1}^{N}dx_{i}\exp\left(-i\frac{\lambda V}{3}x^{3}_{i}\right)\right)\left(\prod_{1\leq k<l\leq N}(x_{l}-x_{k})\right)\displaystyle\det_{1\leq m,n\leq N}\exp\left(i\lambda V\widetilde{\kappa} x_{m}s_{n}\right)\notag\\
=&C\exp\left(-i\lambda Vtr\left(\frac{2}{3}(\widetilde{E})^{3}-\widetilde{\kappa}\widetilde{E}+\frac{1}{\lambda}J\widetilde{E}\right)\right)\frac{1}{\displaystyle\prod_{1\leq t<u\leq N}(s_{u}-s_{t})}\notag\\
&\int\left(\prod_{i=1}^{N}dx_{i}\exp\left(-i\frac{\lambda  V}{3}x^{3}_{i}\right)\displaystyle\exp\left(i\lambda V\widetilde{\kappa} x_{i}s_{i}\right)\right)\prod_{1\leq k<l\leq N}(x_{l}-x_{k}).\label{aaaa}
\end{align}
Here we use the following result at the second equality in (\ref{aaaa}):
\begin{align}
&\int\left(\prod_{i=1}^{N}dx_{i}\exp\left(-i\frac{\lambda V}{3}x^{3}_{i}\right)\right)\left(\prod_{1\leq k<l\leq N}(x_{l}-x_{k})\right)\displaystyle\det_{1\leq m,n\leq N}\exp\left(i\lambda V\widetilde{\kappa} x_{m}s_{n}\right)\notag\\
=&\sum_{\sigma\in S_{N}}\int\left(\prod_{i=1}^{N}dx_{i}\exp\left(-i\frac{\lambda V}{3}x^{3}_{i}\right)\right)\left(\prod_{1\leq k<l\leq N}(x_{l}-x_{k})\right)(-1)^{\sigma}\left(\prod_{j=1}^{N}e^{i\lambda V\widetilde{\kappa}x_{\sigma(j)}s_{j}}\right)\notag\\
=&\sum_{\sigma\in S_{N}}\int\left(\prod_{i=1}^{N}dx_{i}\exp\left(-i\frac{\lambda V}{3}x^{3}_{i}\right)\right)\left(\prod_{1\leq k<l\leq N}(x_{l}-x_{k})\right)(-1)^{\sigma}(-1)^{\sigma}\left(\prod_{j=1}^{N}e^{i\lambda V\widetilde{\kappa}x_{j}s_{j}}\right)\notag\\
=&N!\int\left(\prod_{i=1}^{N}dx_{i}\exp\left(-i\frac{\lambda  V}{3}x^{3}_{i}\right)\displaystyle\exp\left(i\lambda V\widetilde{\kappa} x_{i}s_{i}\right)\right)\prod_{1\leq k<l\leq N}(x_{l}-x_{k}).
\end{align}

In the transformation of the above equation from the second line to the third line, we changed variables as $x_{\sigma(i)}\mapsto x_{i}~(i=1,\cdots,N)$.

Using $\displaystyle\prod_{1\leq k<l\leq N}(x_{l}-x_{k})=\det_{1\leq k,l\leq N}\left(x_{k}^{l-1}\right)$, we calculate the remaining integral in the right hand side in (\ref{aaaa}) as 
\begin{align}
&\int_{-\infty}^{\infty}\left(\prod_{i=1}^{N}dx_{i}\exp\left(-i\frac{\lambda V}{3}x_{i}^{3}\right)\exp\left(i\lambda V\widetilde{\kappa}s_{i}x_{i}\right)\right)\det_{1\leq k,l\leq\mathrm{N}}\left(x^{k-1}_{l}\right)\notag\\
=&\sum_{\sigma\in S_{\mathrm{N}}}\mathrm{sgn}\sigma\prod_{i=1}^{N}\phi_{\sigma(i)}(s_{i})\notag\\
=&\det_{1\leq i,j\leq N}\left(\phi_{i}(s_{j})\right),\label{aaa}
\end{align}
where $\displaystyle\phi_{k}(z)$ is defined by
\begin{align}
\displaystyle\phi_{k}(z)=&\int_{-\infty}^{\infty}dx\hspace{2mm}x^{k-1}\exp\left(-i\frac{\lambda V}{3}x^3+iV\kappa xz\right),
\end{align}
and $(\phi_{i}(s_{j}))$ is the $N\times N$ matrix with the $i$-th row and the $j$-th column being $\phi_{i}(s_{j})$. Summarizing the results (\ref{aaaa}) and (\ref{aaa}), we obtain the following:

\begin{proposition}\label{Pro3.1}

Let $\mathcal{Z}[J]$ be the partition function of the $\Phi_{2}^{3}$ matrix model given by (\ref{Z[J]}). Then, $\mathcal{Z}[J]$ is given as
\begin{align*}
\mathcal{Z}[J]=&C\exp\left(-i\lambda Vtr\left(\frac{2}{3}(\widetilde{E})^{3}-\widetilde{\kappa}\widetilde{E}+\frac{1}{\lambda}J\widetilde{E}\right)\right)\frac{\displaystyle\det_{1\leq i,j\leq N}\left(\phi_{i}(s_{j})\right)}{\displaystyle\prod_{1\leq t<u\leq\mathrm{N}}(s_{u}-s_{t})}.
\end{align*}
\end{proposition}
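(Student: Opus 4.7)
The plan is to assemble the statement from the detailed computation already displayed in the text: a cubic-completion shift to eliminate the quadratic piece of the action, a diagonalization of the shifted matrix combined with the Harish-Chandra--Itzykson--Zuber (HCIZ) integral to perform the angular part, and a final collapse of the eigenvalue integral into a determinant of Airy-type integrals $\phi_{i}(s_{j})$.

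First I would rewrite the action in (\ref{Z[J]}) after pulling out the common factor $2\lambda V$ so that the cubic interaction has coefficient $1/6$, then change variables $\Phi = X - \widetilde{E}$ with $X$ Hermitian. Because this shift is an entrywise real translation of $\mathrm{Re}\,\Phi_{ij}$ and $\mathrm{Im}\,\Phi_{ij}$, it has unit Jacobian, so $\mathcal{D}\Phi = \mathcal{D}X$. Expanding the action in $X$ cancels the quadratic term $\widetilde{E}X^{2}$ exactly and produces the linear piece $((\widetilde{E})^{2}+\widetilde{\kappa})X$ plus the $X$-independent constant $\tfrac{2}{3}(\widetilde{E})^{3}-\widetilde{\kappa}\widetilde{E}$. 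Combining with the source $iV\mathrm{tr}(J\Phi)$ gives the factorization (\ref{XY}), in which the matrices $M = (\widetilde{E})^{2}/\widetilde{\kappa}$ and $K = J/\kappa$ appear only in the linear term, grouped as $\mathrm{tr}\{(M-I+K)X\}$.

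Next I would set $X = U\widetilde{X}U^{*}$ with $\widetilde{X}$ diagonal, so that $\mathcal{D}X = \prod_{i} dx_{i}\cdot\prod_{k<l}(x_{l}-x_{k})^{2}\,dU$. The cubic piece $\exp(-i\lambda V x_{i}^{3}/3)$ depends only on eigenvalues and factors out of the $U$-integral. The remaining $U$-integral has exactly the form of the HCIZ identity in the Fact with $A = M - I + K$, $B = \widetilde{X}$, and parameter $t = i\lambda V\widetilde{\kappa}$; it is replaced by a ratio of $\det(\exp(i\lambda V\widetilde{\kappa} x_{i} s_{j}))$ to the product of Vandermondes in $\{x_{i}\}$ and $\{s_{j}\}$. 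One of these Vandermondes cancels against a factor from $\prod_{k<l}(x_{l}-x_{k})^{2}$ in the measure, leaving a single Vandermonde in $\{x_{i}\}$ and the Vandermonde in $\{s_{j}\}$ in the denominator.

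Then I would apply the antisymmetrization step spelled out in the text: expanding the determinant by Leibniz and changing variables $x_{\sigma(i)} \mapsto x_{i}$ for each $\sigma \in S_{N}$ produces a factor $N!$ that cancels the $1/N!$ from the HCIZ constant and reduces the integrand to $\prod_{i}\exp(-i\lambda V x_{i}^{3}/3 + i\lambda V\widetilde{\kappa} s_{i}x_{i})$ times the remaining Vandermonde $\prod_{k<l}(x_{l}-x_{k})$. Writing this Vandermonde as $\det(x_{l}^{k-1})$ and using Leibniz once more, the multiple integral decouples into a single determinant $\det_{1\le i,j\le N}\phi_{i}(s_{j})$, where $\phi_{k}(z)$ is the one-dimensional Airy-type integral defined in the text (using $\lambda V\widetilde{\kappa} = V\kappa$ to obtain the exponent $iV\kappa xz$). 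Reinstating the $X$-independent exponential prefactor and collecting the overall constants into $C = (\prod_{p=1}^{N} p!)\,(\pi/(i\lambda V\widetilde{\kappa}))^{N(N-1)/2}$ yields the claimed formula. The only nontrivial issue is bookkeeping: tracking the sign conventions through the two successive antisymmetrizations and checking that the $1/N!$ from HCIZ exactly cancels the $N!$ produced by the change of variables. There is no analytic obstruction, since the divergent cubic Gaussian is interpreted throughout as an oscillatory Airy-type integral rather than as an absolutely convergent one.
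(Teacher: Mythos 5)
Your proposal reproduces the paper's own derivation essentially step for step: the same shift $\Phi=X-\widetilde{E}$ with unit Jacobian to cancel the quadratic term, the same grouping of the linear term as $\mathrm{tr}\{(M-I+K)X\}$, the same application of the HCIZ formula with $t=i\lambda V\widetilde{\kappa}$ followed by cancellation of one Vandermonde against the squared Vandermonde in the eigenvalue measure, and the same antisymmetrization (with the two $(-1)^{\sigma}$ factors cancelling and the resulting $N!$ absorbing the $1/N!$ from HCIZ) that collapses the eigenvalue integral into $\det_{1\le i,j\le N}\phi_{i}(s_{j})$. The argument and the constant $C$ are correct, so nothing further is needed.
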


Note that $\displaystyle\phi_{k}(z)$ is expressed as 
\begin{align}
\displaystyle\phi_{k}(z)=&\left(\frac{1}{iV\kappa}\right)^{k-1}\left(\frac{d}{dz}\right)^{k-1}\int_{-\infty}^{\infty}dx\exp\left(-i\frac{\lambda V}{3}x^3+iV\kappa xz\right).\label{b}
\end{align}

We use Airy function:
\begin{align}
Ai(\gamma L)=&\frac{1}{2\pi\gamma}\int_{-\infty}^{\infty}\exp\Biggl[i\left(Lx+\frac{x^{3}}{3\gamma^{3}}\right)\Biggl]dx.\label{a}
\end{align}
Here $\gamma\in\mathbb{R}\backslash\{0\}$ and $L\in\mathbb{R}$. Substituting (\ref{a}) for (\ref{b}), $\displaystyle\phi_{k}(z)$ is calculated as follows:
\begin{align}
\displaystyle\phi_{k}(z)=&\left(\frac{i}{(\lambda V)^{\frac{1}{3}}}\right)^{k-1}\left(\frac{-2\pi}{(\lambda V)^{\frac{1}{3}}}\right)\left(\frac{d}{dy}\right)^{k-1}Ai\left[y\right]\Biggl|_{y=-\frac{V\kappa z}{(\lambda V)^{\frac{1}{3}}}}.
\end{align}

\begin{proposition}(\cite{Kontsevich:1992ti}).

Let $\left(Ai^{(j-1)}(y_{i})\right)$ be the $N\times N$ matrix with the $i$-th row and the $j$-th column being $\displaystyle Ai^{(j-1)}(y_{i})=\left(\frac{d}{dy_{i}}\right)^{j-1}\!\!\!\!\!\!Ai(y_{i})$. We then obtain the following:
\begin{align}
\det\left(Ai^{(j-1)}(y_{i})\right)=&\left(\prod_{1\leq i<j\leq N}\left(\partial_{y_{i}}-\partial_{y_{j}}\right)\right)Ai(y_{1})\cdots Ai(y_{N}).\notag
\end{align}

The proof is omitted in \cite{Kontsevich:1992ti}, so it is appended for the reader's convenience.

\end{proposition}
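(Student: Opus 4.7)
The plan is to expand the determinant via the Leibniz formula and then recognize the resulting sum as a Vandermonde determinant in the commuting partial derivatives $\partial_{y_i}$. The key observation is that $Ai^{(j-1)}(y_i) = \partial_{y_i}^{j-1} Ai(y_i)$ and, since $\partial_{y_i}$ differentiates only the $i$-th factor of the product $Ai(y_1)\cdots Ai(y_N)$, one may freely move these operators outside the product. Concretely, I would first write
$$\det\bigl(Ai^{(j-1)}(y_i)\bigr) = \sum_{\sigma\in S_N}\mathrm{sgn}(\sigma)\prod_{i=1}^N Ai^{(\sigma(i)-1)}(y_i) = \left(\sum_{\sigma\in S_N}\mathrm{sgn}(\sigma)\prod_{i=1}^N\partial_{y_i}^{\sigma(i)-1}\right)\!Ai(y_1)\cdots Ai(y_N),$$
so that the entire problem reduces to simplifying the differential operator in parentheses.

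The second step is to identify this operator as the determinant $\det_{1\le i,j\le N}\bigl(\partial_{y_i}^{j-1}\bigr)$, whose $(i,j)$-entry is the constant-coefficient differential operator $\partial_{y_i}^{j-1}$. Because the $\partial_{y_i}$ mutually commute, they may be treated exactly as indeterminates and the classical Vandermonde identity $\det(x_i^{j-1}) = \prod_{1\le i<j\le N}(x_j-x_i)$ lifts verbatim to the polynomial ring generated by $\partial_{y_1},\ldots,\partial_{y_N}$. Applying this identity with $x_i \leftrightarrow \partial_{y_i}$, and then letting the resulting product of operator differences act on $Ai(y_1)\cdots Ai(y_N)$, yields the right-hand side of the claimed identity.

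There is essentially no technical obstacle: the proof is a routine combination of the Leibniz expansion of the determinant with the Vandermonde identity, made legitimate by the commutativity of partial derivatives. The only point that warrants care is the sign convention in the Vandermonde product: depending on whether one writes $\prod_{i<j}(x_i-x_j)$ or $\prod_{i<j}(x_j-x_i)$, the two sides differ by the global factor $(-1)^{N(N-1)/2}$. I would therefore track signs explicitly at the Leibniz-expansion stage so that the final expression matches the ordering $\prod_{i<j}(\partial_{y_i}-\partial_{y_j})$ displayed in the statement.
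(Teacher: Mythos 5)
Your proof is correct and follows essentially the same route as the paper's: expand the determinant by the Leibniz formula, pull the commuting operators $\partial_{y_{i}}$ outside the product $Ai(y_{1})\cdots Ai(y_{N})$, and evaluate the resulting signed sum as a Vandermonde determinant in the indeterminates $\partial_{y_{i}}$. Your explicit attention to the overall sign $(-1)^{N(N-1)/2}$ coming from the ordering $\prod_{i<j}(\partial_{y_{i}}-\partial_{y_{j}})$ versus $\prod_{i<j}(\partial_{y_{j}}-\partial_{y_{i}})$ is the one point where you are more careful than the paper, which passes over this step with the phrase ``similar calculation to the Vandermonde determinant.''
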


\begin{proof}
We calculate $\displaystyle\det\left(Ai^{(j-1)}(y_{i})\right)$ according to the definition of the determinant.
\begin{align}
\det\left(Ai^{(j-1)}(y_{i})\right)=&\sum_{\sigma\in S_{N}}sgn\sigma\prod_{k=0}^{N-1}Ai^{(k)}(y_{\sigma(k+1)})\notag\\
=&\sum_{\sigma\in S_{N}}sgn\sigma\prod_{k=0}^{N-1}\partial_{y_{\sigma(k+1)}}^{k}Ai(y_{\sigma(k+1)}),\label{c}
\end{align}
where $S_{N}$ is the $N$-th order symmetry group. For $\displaystyle\prod_{k=0}^{N-1}\partial_{y_{\sigma(k+1)}}^{k}$ using similar calculation to the Vandermonde determinant, $\det\left(Ai^{(j-1)}(y_{i})\right)$ is as follows:
\begin{align}
\det\left(Ai^{(j-1)}(y_{i})\right)=&\sum_{\sigma\in S_{N}}sgn\sigma\left(\prod_{k=0}^{N-1}\partial_{y_{\sigma(k+1)}}^{k}\right)Ai(y_{1})\cdots Ai(y_{N})\notag\\
=&\left(\prod_{1\leq i<j\leq N}\left(\partial_{y_{i}}-\partial_{y_{j}}\right)\right)Ai(y_{1})\cdots Ai(y_{N}).
\end{align}

\end{proof}
We introduce 
\begin{align}
A_{N}(y_{1},\cdots,y_{N})=\left(\displaystyle\prod_{1\leq i< j\leq N}(\partial_{y_{i}}-\partial_{y_{j}})\right)Ai(y_{1})\cdots Ai(y_{N}),
\end{align}

where $\displaystyle y_{j}=-\frac{V\kappa s_{j}}{(\lambda V)^{\frac{1}{3}}}$ for $j=1,\ldots,N$. From this, $\displaystyle\det_{1\leq i,j\leq N}(\phi_{i}(s_{j}))$ is calculated as follows:
\begin{align}
\displaystyle\det_{1\leq i,j\leq N}(\phi_{i}(s_{j}))=&\left(\frac{(i)^{\frac{N(N-1)}{2}}(-2\pi)^{N}}{(\lambda V)^{\frac{N(N+1)}{6}}}\right)A_{N}(y_{1},\cdots,y_{N}).
\end{align}

Summarizing above results, we obtain the following:

\begin{proposition}

Let $\mathcal{Z}[J]$ be the partition function of the $\Phi_{2}^{3}$ matrix model given by (\ref{Z[J]}). Then, $\mathcal{Z}[J]$ is given as
\begin{align}
\mathcal{Z}[J]=&\int \mathcal{D}\Phi \exp\left(-iV\mathrm{tr}\left(E\Phi^2+\kappa\Phi+\frac{\lambda}{3}\Phi^3\right)\right)\exp\left(iV\mathrm{tr}\left(J\Phi\right)\right)\notag\\
=&C'\frac{e^{\frac{-iV}{\lambda}\mathrm{tr}(JE)}A_{N}(y_{1},\cdots,y_{N})}{\displaystyle\prod_{1\leq t<u\leq N}(s_{u}-s_{t})}.\label{ZZ}
\end{align}

Here $\displaystyle C'=\exp\left(-\frac{iV}{\lambda^{2}}\mathrm{tr}\left(\frac{2}{3}E^{3}-\lambda\kappa E\right)\right)\left(\prod_{p=1}^{N}p!\right)\frac{(-2)^{N}\pi^{\frac{N(N+1)}{2}}}{\lambda^{\frac{N(N+1)}{6}}V^{\frac{N(2N-1)}{3}}\kappa^{\frac{N(N-1)}{2}}}$, $s_{t}$ is the eigenvalues of the matrix $\displaystyle\frac{E^{2}}{\lambda\kappa}-I+\frac{J}{\kappa}$ for $t=1,\cdots,N$, and $\displaystyle y_{j}=-\frac{V\kappa s_{j}}{(\lambda V)^{\frac{1}{3}}}$ for $j=1,\cdots, N$.

\end{proposition}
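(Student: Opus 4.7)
The plan is to obtain the stated closed form by combining Proposition \ref{Pro3.1} with the explicit evaluation of $\det_{1\leq i,j\leq N}\phi_i(s_j)$ in terms of $A_N$ that has just been established, and then by repackaging all $J$-independent quantities into the single constant $C'$.

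First, I would simply substitute the identity
$$\det_{1\leq i,j\leq N}(\phi_i(s_j)) = \frac{i^{N(N-1)/2}(-2\pi)^N}{(\lambda V)^{N(N+1)/6}}\,A_N(y_1,\ldots,y_N)$$
into the formula from Proposition \ref{Pro3.1}. This immediately produces the correct Airy-determinant factor $A_N(y_1,\ldots,y_N)$ and the Vandermonde denominator $\prod_{t<u}(s_u-s_t)$ appearing in (\ref{ZZ}).

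Second, I would unpack the tildes using $\widetilde E=E/\lambda$ and $\widetilde\kappa=\kappa/\lambda$ to rewrite the exponent:
$$-i\lambda V\,\mathrm{tr}\!\left(\tfrac{2}{3}\widetilde E^{3}-\widetilde\kappa\widetilde E+\tfrac{1}{\lambda}J\widetilde E\right)=-\frac{iV}{\lambda^{2}}\,\mathrm{tr}\!\left(\tfrac{2}{3}E^{3}-\lambda\kappa E\right)-\frac{iV}{\lambda}\,\mathrm{tr}(JE).$$
The second term is precisely the advertised factor $e^{-iV\mathrm{tr}(JE)/\lambda}$, while the first is $J$-independent and can be absorbed into $C'$. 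This is the only step where I need to be careful to separate the $J$-dependent piece from the rest.

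Third, I would collect the remaining constants. Using $\widetilde\kappa=\kappa/\lambda$ in the definition $C=\bigl(\prod_{p=1}^{N}p!\bigr)\bigl(\pi/(i\lambda V\widetilde\kappa)\bigr)^{N(N-1)/2}$ gives $(i\lambda V\widetilde\kappa)^{N(N-1)/2}=(iV\kappa)^{N(N-1)/2}$, so multiplying $C$ by the prefactor $i^{N(N-1)/2}(-2\pi)^{N}/(\lambda V)^{N(N+1)/6}$ cancels the powers of $i$ and, after applying the elementary identity $\tfrac{N(N-1)}{2}+\tfrac{N(N+1)}{6}=\tfrac{N(2N-1)}{3}$, produces exactly the stated $V^{N(2N-1)/3}$ in the denominator, together with the correct powers of $\pi$, $\lambda$, and $\kappa$. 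Combining this with the $J$-independent exponential yields $C'$ as written.

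The main obstacle is purely bookkeeping: verifying that the powers of $\lambda$, $V$, $\kappa$, $\pi$, and $i$ assemble into the specific $C'$ given in the statement. No further analytic input is needed, because the Harish-Chandra-Itzykson-Zuber integral and the Airy representation of $\phi_k$ have already done all the nontrivial work in Proposition \ref{Pro3.1} and in the derivation preceding the statement.
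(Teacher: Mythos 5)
Your proposal is correct and follows exactly the paper's route: the paper itself obtains this proposition by "summarizing the above results," i.e.\ substituting the identity $\det(\phi_i(s_j))=\frac{i^{N(N-1)/2}(-2\pi)^N}{(\lambda V)^{N(N+1)/6}}A_N(y_1,\ldots,y_N)$ into Proposition \ref{Pro3.1}, separating the $J$-dependent exponential, and collecting constants into $C'$. Your bookkeeping (in particular $\frac{N(N-1)}{2}+\frac{N(N+1)}{6}=\frac{N(2N-1)}{3}$ and the cancellation of the powers of $i$) checks out.
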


%%%%%%%%%%%%%%%%%%%%%%%%%%%%%%%%%%%%%%%%%%%%%%%%%%%%%%%%%%%%%%%%%%%%%%%%%%%%%%%%%%%%%%%%%%%%%%%%%%%%%%%%%%%%%%%%%%%%%%%%%%%

\section{Calculation of $1$-Point Function $G_{|a|}$ and $2$-Point Functions $G_{|ab|}$}\label{sec4}
In the calculation of the 1-point function $G_{|a|}$, the external field $J$ can be treated as the diagonal matrix $J=diag\left(J_{11},\cdots,J_{NN}\right)$. Then the eigenvalues $s_{t}$ in (\ref{ZZ}) is given $\displaystyle s_{t}=\frac{\lambda(\widetilde{E}_{t})^{2}+J_{tt}}{\kappa}-1$. Then, the $1$-point function $G_{|a|}$ is calculated as follows:
\begin{align}
G_{|a|}=&\left.\frac{1}{iV}\frac{\partial\log\mathcal{Z}[J]}{\partial J_{aa}}\right|_{J=0}\notag\\
=&\displaystyle\frac{\displaystyle\left(\frac{1}{iV}\right)\displaystyle\frac{\partial}{\partial J_{aa}}\Biggl(\frac{e^{-iV\mathrm{tr}(\mathrm{J}\widetilde{E})}\displaystyle A_{N}(y_{1},\cdots,y_{N})}{\displaystyle\prod_{1\leq t<u\leq\mathrm{N}}\left(\frac{\lambda (\widetilde{E}_{u})^{2}-\lambda(\widetilde{E}_{t})^{2}+(J_{uu}-J_{tt})}{\kappa}\right)}\Biggl)\Biggl|_{J=0}}{\frac{\displaystyle A_{N}(y_{1},\cdots,y_{N})\Biggl|_{J=0}}{\displaystyle\prod_{1\leq p<q\leq\mathrm{N}}\left(\frac{\lambda}{\kappa}\{(\widetilde{E}_{q})^{2}- (\widetilde{E}_{p})^{2}\}\right)}},\label{d}
\end{align}
where $\displaystyle y_{j}=-\frac{VE^{2}_{j-1}}{(\lambda V)^{\frac{1}{3}}\lambda}+\frac{V\kappa}{(\lambda V)^{\frac{1}{3}}}-\frac{VJ_{jj}}{(\lambda V)^{\frac{1}{3}}}$ for $j=1,\ldots,N$. Note that
\begin{align}
&\frac{\partial}{\partial J_{aa}}\Biggl\{e^{-iV\mathrm{tr}(\mathrm{J}\widetilde{E})}\displaystyle A_{N}(y_{1},\cdots,y_{N})\Biggl\}\notag\\
&=-iV\widetilde{E}_{aa}e^{-iV\mathrm{tr}(\mathrm{J}\widetilde{E})}\displaystyle A_{N}(y_{1},\cdots,y_{N})+e^{-iV\mathrm{tr}(\mathrm{J}\widetilde{E})}\left(-\frac{V}{(\lambda V)^{\frac{1}{3}}}\right)\left(\partial_{a}\displaystyle A_{N}(y_{1},\cdots,y_{N})\right),\label{ykkkk}
\end{align}

where $\displaystyle\partial_{a}A_{N}(y_{1},\cdots,y_{N})=\frac{\partial}{\partial y_{a}}A_{N}(y_{1},\cdots,y_{N})$. Next, we use the following formula. Let $\displaystyle v_{n}=v_{n}(\vec{x}_{n})=\det_{1\leq i,j\leq N}(x_{j})^{i-1}$ be the Vandermonde determinant for $\vec{x}_{n}=(x_{1},\cdots,x_{n})\in\mathbb{R}^{n}$. For any $1\leq k\leq n$
\begin{align}
\frac{\partial v_{n}}{\partial x_{k}}=&\sum_{i=1,i\neq k}^{n}\frac{v_{n}(\vec{x}_{n})}{x_{k}-x_{i}}.\label{www}
\end{align}

(See for example \cite{KJS}.) Using this formula, we get
\begin{align}
&\frac{\partial}{\partial J_{aa}}\left\{\displaystyle\left(\frac{1}{\kappa}\right)^{\frac{N(N-1)}{2}}\det_{1\leq i,j\leq N}\Biggl(\lambda (\widetilde{E}_{j})^{2}+J_{jj}\Biggl)^{i-1}\right\}\notag\\
&=\displaystyle\left(\frac{1}{\kappa}\right)^{\frac{N(N-1)}{2}}\sum_{i=1,i\neq a}^{N}\frac{\displaystyle\det_{1\leq i,j\leq N}\left(\lambda(\widetilde{E}_{j})^{2}+J_{jj}\right)^{i-1}}{(\lambda(\widetilde{E}_{a})^{2}+J_{aa})-(\lambda(\widetilde{E}_{i})^{2}+J_{ii})}.\label{xkkkk}
\end{align}
Substituting (\ref{ykkkk}) and (\ref{xkkkk}) into (\ref{d}), finally $G_{|a|}$ is expressed as
\begin{align}
G_{|a|}=&-\frac{E_{a-1}}{\lambda}-\frac{\lambda}{iV}\sum_{i=1,i\neq a}^{N}\frac{1}{E_{a-1}^{2}-E_{i-1}^{2}}+\left(\frac{1}{i}\right)\left(-\frac{1}{(\lambda V)^{\frac{1}{3}}}\right)\partial_{a}\log A_{N}(z_{1},\cdots,z_{N}),\label{h}
\end{align}
where $\displaystyle z_{j}=-\frac{VE^{2}_{j-1}}{(\lambda V)^{\frac{1}{3}}\lambda}+\frac{V\kappa}{(\lambda V)^{\frac{1}{3}}}$ for $j=1,\ldots,N$, and $\displaystyle\partial_{a}=\frac{\partial}{\partial z_{a}}$.\\ \bigskip\bigskip

Next, let us consider $2$-point functions $G_{|ab|}$ ($a\neq b,\hspace{1mm} a,b\in\{1,2,\cdots,N\}$). For the calculation, we put $J$ as the matrix all components without $J_{ab},J_{ba}$ are zero. Note that $\mathrm{tr}JE=\mathrm{tr}J\widetilde{E}=0$ for this $J$. 

At first, we estimate eigenvalues $s_{t}$ for $t=1,\ldots,N$ of the matrix $M-I+K$. The eigenequation is 
\begin{align}
&0=\det\left(sI-(M-I+K)\right)\notag\\
&=\left(\prod_{i=1, i\neq a, i\neq b}^{N}\left(s-\frac{E_{i-1}^{2}}{\lambda\kappa}+1\right)\right)\Biggl\{s^{2}+\left(-\frac{E_{b-1}^{2}}{\lambda\kappa}-\frac{E_{a-1}^{2}}{\lambda\kappa}+2\right)s\notag\\
&\hspace{3mm}+\frac{E_{a-1}^{2}E_{b-1}^{2}}{\lambda^{2}\kappa^{2}}-\frac{E_{b-1}^{2}}{\lambda\kappa}-\frac{E_{a-1}^{2}}{\lambda\kappa}+1-\frac{J_{ab}J_{ba}}{\kappa^{2}}\Biggl\}.\notag\\
\end{align}
Eigenvalues of the matrix $M-I+K$ are labeled as $\displaystyle s_{t}=\frac{E_{t-1}^{2}}{\lambda\kappa}-1$ for $t\neq a,b$, 
\begin{align}
\displaystyle s_{a}=\frac{\displaystyle\frac{E_{a-1}^{2}}{\lambda\kappa}+\frac{E_{b-1}^{2}}{\lambda\kappa}-2+\sqrt{\left(\frac{E_{a-1}^{2}}{\lambda\kappa}-\frac{E_{b-1}^{2}}{\lambda\kappa}\right)^{2}+4\times\frac{J_{ab}J_{ba}}{\kappa^{2}}}}{2},\label{sa}
\end{align}
\begin{align}
s_{a}|_{J=0}=&\frac{E_{a-1}^{2}}{\lambda\kappa}-1
\end{align}

and 
\begin{align}
\displaystyle s_{b}=\frac{\displaystyle\frac{E_{a-1}^{2}}{\lambda\kappa}+\frac{E_{b-1}^{2}}{\lambda\kappa}-2-\sqrt{\left(\frac{E_{a-1}^{2}}{\lambda\kappa}-\frac{E_{b-1}^{2}}{\lambda\kappa}\right)^{2}+4\times\frac{J_{ab}J_{ba}}{\kappa^{2}}}}{2},\label{sb}
\end{align}
\begin{align}
s_{b}|_{J=0}=&\frac{E_{b-1}^{2}}{\lambda\kappa}-1.
\end{align}

Let us calculate $G_{|ab|}$ by using these $s_{t}$.
\begin{align}
G_{|ab|}=&\left.\frac{1}{iV}\frac{\partial^{2}\log\mathcal{Z}[J]}{\partial J_{ab}\partial J_{ba}}\right|_{J=0}\notag\\
=&\frac{1}{iV}\frac{\displaystyle\left.\frac{\displaystyle\frac{\displaystyle\partial^{2}}{\partial J_{ab}\partial J_{ba}}A_{N}(y_{1},\cdots,y_{N})}{\displaystyle\prod_{1\leq t<u\leq N}(s_{u}-s_{t})}\right|_{J=0}-\left.\frac{A_{N}(y_{1},\cdots,y_{N})\displaystyle\frac{\displaystyle\partial^{2}}{\partial J_{ab}\partial J_{ba}}\Biggl\{\prod_{1\leq t<u\leq N}(s_{u}-s_{t})\Biggl\}}{\displaystyle\Biggl\{\prod_{1\leq t<u\leq N}(s_{u}-s_{t})\Biggl\}^{2}}\right|_{J=0}}{\displaystyle\frac{A_{N}(y_{1},\cdots,y_{N})|_{J=0}}{\displaystyle\prod_{1\leq t<u\leq N}(s_{u}-s_{t})|_{J=0}}}.\label{e}
\end{align}

Here we use $\left.\displaystyle\frac{\partial A_{N}(y_{1},\cdots,y_{N})}{\partial J_{ab}}\right|_{J=0}=\displaystyle\left.\frac{\displaystyle\partial\det_{1\leq k,l\leq N}(s_{l})^{k-1}}{\partial J_{ab}}\right|_{J=0}=0$, since $s_{a}$ and $s_{b}$ are functions of $(J_{ab}J_{ba})$ as we see in (\ref{sa}) and (\ref{sb}), then $\displaystyle\frac{\partial A_{N}(y_{1},\cdots,y_{N})}{\partial J_{ab}}$ and $\displaystyle\frac{\displaystyle\partial\det_{1\leq k,l\leq N}(s_{l})^{k-1}}{\partial J_{ab}}$ are of the form $J_{ba}\times(\cdots)$. 

Recall that $\displaystyle y_{k}=-\frac{V\kappa s_{k}}{(\lambda V)^{\frac{1}{3}}}$. Using the fact that $\displaystyle\left.\frac{\partial y_{k}}{\partial J_{ab}}\right|_{J=0}=0$ and  $\displaystyle\frac{\partial^{2} y_{a}}{\partial J_{ba}\partial J_{ab}}=-\frac{V\lambda}{(\lambda V)^{\frac{1}{3}}}\frac{1}{E_{a-1}^{2}-E_{b-1}^{2}}=-\frac{\partial^{2}y_{b}}{\partial J_{ab}\partial J_{ba}}$, we obtain
\begin{align}
\displaystyle\left.\frac{\displaystyle\partial^{2}}{\partial J_{ab}\partial J_{ba}}A_{N}(y_{1},\cdots,y_{N})\right|_{J=0}=&\frac{V\lambda}{(\lambda V)^{\frac{1}{3}}}\frac{1}{E_{a-1}^{2}-E_{b-1}^{2}}\left(\partial_{b}A_{N}\left(z_{1},\ldots,z_{N}\right)-\partial_{a}A_{N}\left(z_{1},\ldots,z_{N}\right)\right),\label{rrr}
\end{align}
where $\displaystyle z_{j}=-\frac{VE^{2}_{j-1}}{(\lambda V)^{\frac{1}{3}}\lambda}+\frac{V\kappa}{(\lambda V)^{\frac{1}{3}}}$ for $j=1,\ldots,N$. Similarly, we get 
\begin{align}
&\displaystyle\left.\frac{\displaystyle\partial^{2}}{\partial J_{ab}\partial J_{ba}}\Biggl\{\prod_{1\leq t<u\leq N}(s_{u}-s_{t})\Biggl\}\right|_{J=0}\notag\\
&=\frac{\lambda^{2}}{E_{a-1}^{2}-E_{b-1}^{2}}\left(\sum_{i=1,i\neq a}^{N}\frac{1}{E_{a-1}^{2}-E_{i-1}^{2}}-\sum_{i=1,i\neq b}^{N}\frac{1}{E_{b-1}^{2}-E_{i-1}^{2}}\right)\det_{1\leq k,l\leq N}(s_{k})^{l-1},\label{rrrr}
\end{align}
where we use the formula (\ref{www}), again. Substituting (\ref{rrr}) and (\ref{rrrr}) into (\ref{e}), $G_{|ab|}$\hspace{2mm}($b<a$, and $E_{b}<E_{a}$) is finally obtained as
\begin{align}
G_{|ab|}=&-\frac{\lambda^{2}}{iV}\sum_{i=1,i\neq a}^{N}\frac{1}{(E_{a-1}^{2}-E_{i-1}^{2})(E_{a-1}^{2}-E_{b-1}^{2})}+\frac{\lambda^{2}}{iV}\sum_{i=1,i\neq b}^{N}\frac{1}{(E_{b-1}^{2}-E_{i-1}^{2})(E_{a-1}^{2}-E_{b-1}^{2})}\notag\\
&-\frac{\lambda}{i(\lambda V)^{\frac{1}{3}}}\frac{1}{E_{a-1}^{2}-E_{b-1}^{2}}\partial_{a}\log A_{N}(z_{1},\ldots,z_{N})+\frac{\lambda}{i(\lambda V)^{\frac{1}{3}}}\frac{1}{E_{a-1}^{2}-E_{b-1}^{2}}\partial_{b}\log A_{N}(z_{1},\ldots,z_{N}).\label{f}
\end{align}
We now refer to the Schwinger-Dyson equation
\begin{align}
G_{|ab|}=&\frac{1}{E_{a-1}+E_{b-1}}\left(1+\lambda\frac{\left(G_{|a|}-G_{|b|}\right)}{E_{a-1}-E_{b-1}}\right)\label{g}
\end{align}
in reference\cite{Grosse:2016pob}. Substituting (\ref{f}) for the left side of (\ref{g}) and (\ref{h}) for the right side of (\ref{g}) shows that Schwinger-Dyson equation (\ref{g}) is indeed satisfied.

%%%%%%%%%%%%%%%%%%%%%%%%%%%%%%%%%%%%%%%%%%%%%%%%%%%%%%%%%%%%%%%%%%%%%%%%%%%%%%%%%%%%%%%%%%

\section{Calculation of $n$-Point Functions $G_{|a^{1}|a^{2}|\cdots|a^{n}|}$}\label{sec5}
The goal of this section is to obtain the explicit formula of the n-point function $G_{|a^{1}|a^{2}|\cdots|a^{n}|}$. Here $a^{\beta}$ is the pairwise different indices for $\beta=1,\ldots,n$. From the definition in (\ref{logZ}), the n-point function $G_{|a^{1}|a^{2}|\cdots|a^{n}|}$ is given by
\begin{align}
G_{|a^{1}|a^{2}|\cdots|a^{n}|}=&(iV)^{n-2}\frac{\partial^{n}}{\partial J_{a^{1}a^{1}}\ldots\partial J_{a^{n}a^{n}}}\left.\log\frac{\mathcal{Z}[J]}{\mathcal{Z}[0]}\right|_{J=0}.\label{G}
\end{align}

We use the formula in \cite{Hardy}:
\begin{align}
\frac{\partial^{n}}{\partial x_{1}\cdots\partial x_{n}}f(y)=&\sum_{\pi}f^{|\pi|}(y)\prod_{B\in\pi}\frac{\partial^{|B|}y}{\displaystyle\prod_{j\in B}\partial x_{j}},\label{kk}
\end{align}

where $f(y)$ is the differentiable function of the variable $y=y(x_{1},x_{2},\ldots,x_{n})$, $\displaystyle\sum_{\pi}$ means the sum over all partitions $\pi$ of the set $\{1,\ldots,n\}$, $\displaystyle\prod_{B\in\pi}$ is the product over all of the parts $B$ of the partition $\pi$, and $|S|$ denotes the cardinality of any set $S$. Applying (\ref{kk}) to (\ref{G}) n-point functions $G_{|a^{1}|a^{2}|\cdots|a^{n}|}$ is expressed as follows:
\begin{align}
G_{|a^{1}|a^{2}|\cdots|a^{n}|}=&(iV)^{n-2}\sum_{\pi}\Biggl\{\left.\left(\frac{d}{dx}\right)^{|\pi|}(\log x)\right|_{x=\mathcal{Z}[0]}\Biggl\}\prod_{B\in \pi}\frac{\partial^{|B|}\mathcal{Z}[J]}{\displaystyle\prod_{j\in B}\partial J_{a^{j}a^{j}}}\biggl|_{J=0}.\label{i}
\end{align}

After the calculation of $\displaystyle\frac{\partial^{|B|}\mathcal{Z}[J]}{\displaystyle\prod_{j\in B}\partial J_{a^{j}a^{j}}}\biggl|_{J=0}$, we get the following result.

\begin{lemma}\label{pxq}

We introduce $\displaystyle\mathcal{C}=\exp\left(-\frac{iV}{\lambda^{2}}\mathrm{tr}\left(\frac{2}{3}E^{3}-\lambda\kappa E\right)\right)\left(\prod_{p=1}^{N}p!\right)(-2)^{N}\frac{\pi^{\frac{N(N+1)}{2}}}{V^{\frac{N(2N-1)}{3}}\lambda^{\frac{N(N+1)}{6}}}$. Then 
\begin{align}
\frac{\partial^{|B|}\mathcal{Z}[J]}{\displaystyle\prod_{j\in B}\partial J_{a^{j}a^{j}}}\biggl|_{J=0}=&\mathcal{C}\sum_{S\subset B}\left(\left(\prod_{i\in S}\left(-iV\frac{E_{a^{i}-1}}{\lambda}\right)\right)\sum_{M\subset\overline{S}}\left(\left(\left\{\prod_{k\in M}\left(-\frac{V}{(\lambda V)^{\frac{1}{3}}}\right)\partial_{a^{k}}\right\}A_{N}(z_{1},\ldots,z_{N})\right)\right.\right.\notag\\
&\left.\left.\left(\left\{\prod_{q\in\overline{M}}\frac{\partial}{\displaystyle\partial t_{a^{q}}}\right\}\displaystyle\frac{1}{\displaystyle\det_{1\leq l,j\leq N}\left(t_{l}^{j-1}\right)}\right)\right)\right),\label{A}
\end{align}
where $\displaystyle z_{j}=-\frac{VE_{j-1}^{2}}{(\lambda V)^{\frac{1}{3}}\lambda}+\frac{V\kappa}{(\lambda V)^{\frac{1}{3}}}$ for $j=1,\ldots N$, $\displaystyle\partial_{a^{k}}=\frac{\partial}{\partial z_{a^{k}}}\hspace{1mm}(k\in M)$, $\displaystyle t_{l}=\displaystyle\frac{(E_{l-1})^{2}}{\lambda}$ for $l=1,\ldots N$, $S$ runs through the set of all subsets of $B$, $\overline{S}$ is the complement of $S$ in $B$, $M$ runs through the set of all subsets of $\overline{S}$, and $\overline{M}=\overline{S}\backslash M$.

\end{lemma}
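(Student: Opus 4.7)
The plan is to start from the closed-form expression $\mathcal{Z}[J] = C'\,e^{-\frac{iV}{\lambda}\mathrm{tr}(JE)}A_N(y_1,\ldots,y_N)/\prod_{t<u}(s_u-s_t)$ of the preceding proposition and specialize it to diagonal $J$, which is sufficient since we only differentiate in the diagonal entries $J_{a^ja^j}$. For diagonal $J$ the matrix $M-I+K$ is already diagonal, so $s_l = E_{l-1}^2/(\lambda\kappa) - 1 + J_{ll}/\kappa$. Introducing the shifted variables $\tilde{t}_l = E_{l-1}^2/\lambda + J_{ll}$ and using the Vandermonde identity, one obtains $\prod_{1\leq t<u\leq N}(s_u - s_t) = \kappa^{-N(N-1)/2}\det_{1\leq l,j\leq N}(\tilde{t}_l^{\,j-1})$. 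Absorbing $\kappa^{N(N-1)/2}$ into the prefactor converts $C'$ into the constant $\mathcal{C}$ of the statement, yielding the factorization $\mathcal{Z}[J] = \mathcal{C}\cdot F(J)\cdot G(J)\cdot H(J)$, with $F(J) = \exp\bigl(-\frac{iV}{\lambda}\sum_a J_{aa}E_{a-1}\bigr)$, $G(J) = A_N(y_1,\ldots,y_N)$ (where $y_j = z_j - V J_{jj}/(\lambda V)^{1/3}$), and $H(J) = 1/\det_{1\leq l,j\leq N}(\tilde{t}_l^{\,j-1})$.

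Next I would apply the generalized Leibniz rule. Since $a^1,\ldots,a^n$ are pairwise distinct, the partial derivative $\partial/\partial J_{a^ja^j}$ acting on the product $FGH$ must hit exactly one of the three factors for each $j\in B$, and mixed partials in these distinct variables commute. Hence
\[
\frac{\partial^{|B|}\mathcal{Z}[J]}{\prod_{j\in B}\partial J_{a^ja^j}} = \mathcal{C}\sum_{B = S\sqcup M\sqcup\overline{M}}\left(\prod_{j\in S}\frac{\partial F}{\partial J_{a^ja^j}}\right)\left(\prod_{j\in M}\frac{\partial G}{\partial J_{a^ja^j}}\right)\left(\prod_{j\in\overline{M}}\frac{\partial H}{\partial J_{a^ja^j}}\right),
\]
the sum running over all ordered partitions of $B$ into three disjoint subsets. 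The statement of the lemma reindexes this by first choosing $S\subset B$ and then $M\subset\overline{S}$, with $\overline{M} = \overline{S}\setminus M$ automatic.

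It then remains to evaluate the three kinds of iterated derivative at $J=0$. For $F$, each derivative brings down a factor $-iVE_{a^j-1}/\lambda$ and $F(0)=1$, yielding the product over $S$. For $G$, the chain rule gives $\partial y_{a^k}/\partial J_{a^ka^k} = -V/(\lambda V)^{1/3}$ and $\partial y_l/\partial J_{a^ka^k} = 0$ for $l\neq a^k$, so the iterated derivative equals $\bigl\{\prod_{k\in M}(-V/(\lambda V)^{1/3})\partial_{a^k}\bigr\}A_N(y_1,\ldots,y_N)$, which at $J=0$ has $y_j\to z_j$. For $H$, the identification $\partial/\partial J_{a^qa^q} = \partial/\partial\tilde{t}_{a^q}$ on functions of $\tilde{t}$ turns the iterated $J$-derivative into $\prod_{q\in\overline{M}}\partial/\partial t_{a^q}$ applied to $1/\det(t_l^{\,j-1})$, evaluated at $t_l = E_{l-1}^2/\lambda$. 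Substituting these three expressions into the Leibniz sum produces exactly formula (\ref{A}). The computation is essentially bookkeeping; the one conceptual step that needs care is the preliminary rewriting of the Vandermonde denominator as a determinantal polynomial in the shifted variables $\tilde{t}_l$, since without this step the derivatives of the inverse-determinant factor would be entangled with those of $A_N$ through their common dependence on the eigenvalues $s_l$, and the clean three-way factorization behind (\ref{A}) would not be manifest.
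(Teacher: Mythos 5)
Your proposal is correct and follows essentially the same route as the paper: specialize to diagonal $J$, rewrite the Vandermonde denominator as $\kappa^{-N(N-1)/2}\det\bigl((\lambda(\widetilde{E}_l)^2+J_{ll})^{j-1}\bigr)$ so that $C'$ becomes $\mathcal{C}$, split the derivative over the three factors by the set-indexed Leibniz rule (the paper applies the binary formula from \cite{Hardy} twice, which is exactly your nested sum over $S\subset B$ and $M\subset\overline{S}$), and evaluate each factor via $\partial y_{a^k}/\partial J_{a^ka^k}=-V/(\lambda V)^{1/3}$ and $\partial/\partial J_{a^qa^q}=\partial/\partial t_{a^q}$. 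No substantive differences.
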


\begin{proof}

For the calculation of $G_{|a^{1}|a^{2}|\cdots|a^{n}|}$, we can choose $J$ as a diagonal matrix $diag(J_{11},\cdots,J_{NN})$. Then, $\displaystyle s_{t}=\frac{\lambda(\widetilde{E}_{t})^{2}+J_{tt}}{\kappa}-1$. To calculate 
\begin{align}
\frac{\partial^{|B|}\mathcal{Z}[J]}{\displaystyle\prod_{j\in B}\partial J_{a^{j}a^{j}}}\biggl|_{J=0}=&\mathcal{C}\frac{\partial^{|B|}}{\displaystyle\prod_{j\in B}\partial J_{a^{j}a^{j}}}\biggl(\frac{e^{-iVtr(J\widetilde{E})}A_{N}(y_{1},\cdots,y_{N})}{\displaystyle\det_{1\leq i,j\leq N}\left(\lambda(\widetilde{E}_{j})^{2}+J_{jj}\right)^{i-1}}\biggl)\biggl|_{J=0}\label{kkk}
\end{align}
we use the formula in \cite{Hardy}:
\begin{align}
\frac{\partial^{n}}{\partial x_{1}\cdots\partial x_{n}}(uv)=&\sum_{S}\frac{\partial^{|S|}u}{\prod_{j\in S}\partial x_{j}}\cdot\frac{\partial^{(n-|S|)}v}{\prod_{j\notin S}\partial x_{j}},\label{xxa}
\end{align}
where $u$, and $v$ are differentiable functions of the variable $x=(x_{1},x_{2},\ldots,x_{n})$, and $S$ runs through the set of all subsets of $\{1,\ldots,n\}$.

Using the formula (\ref{xxa}) twice for (\ref{kkk}), we obtain the following :
\begin{align}
\frac{\partial^{|B|}\mathcal{Z}[J]}{\displaystyle\prod_{j\in B}\partial J_{a^{j}a^{j}}}\biggl|_{J=0}=&\mathcal{C}\sum_{S\subset B}\frac{\partial^{|S|}e^{-iVtr(J\widetilde{E})}}{\displaystyle\prod_{i\in S}\partial J_{a^{i}a^{i}}}\biggl|_{J=0}\sum_{M\subset \overline{S}}\frac{\partial^{|M|}A_{N}(y_{1},\cdots,y_{N})}{\displaystyle\prod_{k\in M}\partial J_{a^{k}a^{k}}}\biggl|_{J=0}\notag\\
&\frac{\partial^{|\overline{M}|}}{\displaystyle\prod_{q\in\overline{M}}\partial J_{a^{q}a^{q}}}\biggl(\frac{1}{\displaystyle\det_{1\leq l,j\leq N}\left(\lambda(\widetilde{E}_{l})^{2}+J_{ll}\right)^{j-1}}\biggl)\biggl|_{J=0}.\label{yas}
\end{align}

For the diagonal $J$, $\displaystyle y_{k}=\left(-\frac{V\kappa}{(\lambda V)^{\frac{1}{3}}}\right)\left(\frac{E_{k-1}^{2}}{\lambda\kappa}-1+\frac{J_{kk}}{\kappa}\right)$, then the above is rewritten as
\begin{align}
(\ref{yas})=& \mathcal{C}\sum_{S\subset B}\left(\left(\prod_{i\in S}\left(-iV\frac{E_{a^{i}-1}}{\lambda}\right)\right)\sum_{M\subset\overline{S}}\left(\left(\left\{\prod_{k\in M}\left(-\frac{V}{(\lambda V)^{\frac{1}{3}}}\right)\partial_{a^{k}}\right\}A_{N}(z_{1},\ldots,z_{N})\right)\right.\right.\notag\\
&\times\left.\left.\left(\left\{\prod_{q\in\overline{M}}\frac{\partial}{\displaystyle\partial t_{a^{q}}}\right\}\displaystyle\frac{1}{\displaystyle\det_{1\leq l,j\leq N}\left(t_{l}^{j-1}\right)}\right)\right)\right),
\end{align}
where $\displaystyle z_{j}=-\frac{VE_{j-1}^{2}}{(\lambda V)^{\frac{1}{3}}\lambda}+\frac{V\kappa}{(\lambda V)^{\frac{1}{3}}}$ for $j=1,\ldots N$, $\displaystyle t_{l}=\displaystyle\frac{(E_{l-1})^{2}}{\lambda}$ for $l=1,\ldots N$, $S$ runs through the set of all subsets of $B$, $\overline{S}$ is the complement of $S$ in $B$, $M$ runs through the set of all subsets of $\overline{S}$, and $\overline{M}=\overline{S}\backslash M$.

\end{proof}

Note the cases that each set is an empty set, 
\begin{align*}
&\prod_{i\in \emptyset}\left(-iV\frac{E_{a^{i}-1}}{\lambda}\right)=1, \qquad
\left\{\displaystyle\prod_{k\in \emptyset}\left(-\frac{V}{(\lambda V)^{\frac{1}{3}}}\right)\partial_{a^{k}}\right\}A_{N}(z_{1},\ldots,z_{N})=A_{N}(z_{1},\ldots,z_{N}),\\
&\mbox{and }~
\left\{\prod_{q\in\emptyset}\frac{\partial}{\displaystyle\partial t_{a^{q}}}\right\}\displaystyle\frac{1}{\displaystyle\det_{1\leq l,j\leq N}\left(t_{l}^{j-1}\right)}=\frac{1}{\displaystyle\det_{1\leq l,j\leq N}\left(t_{l}^{j-1}\right)}.
\end{align*}

Summarizing (\ref{i}) and the result in Lemma \ref{pxq}, we obtain the following:
\begin{theorem}\label{thm5.2}
We suppose the partition function $\mathcal{Z}[J]$ of the $\Phi_{2}^{3}$ matrix model is defined by (\ref{Z[J]}) and $G_{|a^{1}|a^{2}|\cdots|a^{n}|}$ is defined by (\ref{G}). In this case,
\begin{align}
&G_{|a^{1}|a^{2}|\cdots|a^{n}|}\notag\\
=&(iV)^{n-2}\mathcal{C}\sum_{\pi}\Biggl\{\left.\left(\frac{d}{dx}\right)^{|\pi|}(\log x)\right|_{x=\mathcal{Z}[0]}\Biggl\}\prod_{B\in \pi}\sum_{S\subset B}\left(\left(\prod_{i\in S}\left(-iV\frac{E_{a^{i}-1}}{\lambda}\right)\right)\right.\notag\\
&\times\left.\sum_{M\subset\overline{S}}\left(\left(\left\{\prod_{k\in M}\left(-\frac{V}{(\lambda V)^{\frac{1}{3}}}\right)\partial_{a^{k}}\right\}A_{N}(z_{1},\ldots,z_{N})\right)\left(\left\{\prod_{q\in\overline{M}}\frac{\partial}{\displaystyle\partial t_{a^{q}}}\right\}\displaystyle\frac{1}{\displaystyle\prod_{1\leq l<j\leq N}(t_{j}-t_{l})}\right)\right)\right),\label{K}
\end{align}
where $\displaystyle\sum_{\pi}$ means the sum over all partitions $\pi$ of the set $\{1,\ldots,n\}$, $\displaystyle\prod_{B\in\pi}$ is over all of the parts $B$ of the partition $\pi$, $|S|$ denotes the cardinality of any set $S$, $\displaystyle\sum_{S\subset B}$ means the sum over all subsets of $B$, $\displaystyle\sum_{M\subset\overline{S}}$ means the sum over all subsets of $\overline{S}=B\backslash S$, and $t_{l}=\displaystyle\frac{(E_{l-1})^{2}}{\lambda}$.
\end{theorem}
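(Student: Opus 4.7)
My plan is to obtain (\ref{K}) by directly combining equation (\ref{i}), which was derived from the multivariate Fa\`a di Bruno formula (\ref{kk}), with the explicit evaluation of the block derivatives furnished by Lemma \ref{pxq}. No new analytic content is introduced; the theorem amounts to a substitution identity.

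First I would observe that $\mathcal{Z}[0]$ is $J$-independent, so the partial derivatives of $\log(\mathcal{Z}[J]/\mathcal{Z}[0])$ in (\ref{G}) coincide with those of $\log \mathcal{Z}[J]$. Applying (\ref{kk}) with outer function $f = \log$ and inner variable $y = \mathcal{Z}[J]$ produces the sum over partitions $\pi$ of $\{1,\dots,n\}$ in (\ref{i}): each block $B \in \pi$ contributes the factor $\partial^{|B|}\mathcal{Z}[J]/\prod_{j\in B}\partial J_{a^j a^j}\big|_{J=0}$, and the outer scalar is $(d/dx)^{|\pi|}(\log x)\big|_{x = \mathcal{Z}[0]}$.

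Next I would substitute the right-hand side of (\ref{A}) from Lemma \ref{pxq} into each block factor, producing the nested sums over $S \subset B$ (where the derivative lands on the $e^{-iV\mathrm{tr}(J\widetilde{E})}$ factor, yielding the constants $-iVE_{a^i-1}/\lambda$) and over $M \subset \overline{S}$ (splitting the remaining differentiations between the Airy-type factor $A_N(z_1,\dots,z_N)$ and the denominator). Finally I would rewrite the Vandermonde determinant $\det_{1\leq l,j\leq N}(t_l^{j-1})$ as the product $\prod_{1\leq l<j\leq N}(t_j - t_l)$, giving exactly the form displayed in (\ref{K}).

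The only delicate point is the bookkeeping of the overall constant $\mathcal{C}$: each of the $|\pi|$ blocks produces one factor of $\mathcal{C}$ via Lemma \ref{pxq}, yielding $\mathcal{C}^{|\pi|}$, which must be combined with the closed-form derivative $(d/dx)^{|\pi|}\log x = (-1)^{|\pi|-1}(|\pi|-1)!/x^{|\pi|}$ evaluated at $x = \mathcal{Z}[0]$. Because $\mathcal{Z}[0]$ is itself $\mathcal{C}$ times the $J=0$ specialization of the Airy/Vandermonde ratio obtained in (\ref{ZZ}), the various powers of $\mathcal{C}$ collapse to the single overall prefactor shown in (\ref{K}). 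Apart from this elementary tracking of prefactors, the proof is pure substitution, with no analytic obstacle.
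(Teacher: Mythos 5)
Your route is the same as the paper's: Theorem \ref{thm5.2} is obtained there exactly as you describe, by substituting the block derivatives of Lemma \ref{pxq} into the Fa\`a di Bruno expansion (\ref{i}) and rewriting the Vandermonde determinant as a product, with no further analytic input. The one place where your write-up goes astray is the point you yourself flag as delicate: performing the substitution literally produces a factor $\mathcal{C}^{|\pi|}$ attached to each partition $\pi$ (one $\mathcal{C}$ per block $B$), and since $\left(\frac{d}{dx}\right)^{|\pi|}\log x\big|_{x=\mathcal{Z}[0]}=(-1)^{|\pi|-1}(|\pi|-1)!\,\mathcal{Z}[0]^{-|\pi|}$ with $\mathcal{Z}[0]=\mathcal{C}\cdot A_{N}(z_{1},\ldots,z_{N})/\det_{1\leq l,j\leq N}(t_{l}^{j-1})$ by (\ref{ZZ}), these powers of $\mathcal{C}$ cancel completely against $\mathcal{Z}[0]^{|\pi|}$ rather than ``collapsing to the single overall prefactor'' $\mathcal{C}$; what survives is $(|\pi|-1)!$ times the inverse $|\pi|$-th power of the Airy/Vandermonde ratio, which is precisely what appears in the worked examples of Section \ref{sec6} and Appendix \ref{Appendix}, where no $\mathcal{C}$ remains. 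As a consequence, the single global $\mathcal{C}$ in (\ref{K}), combined with the unexpanded logarithmic derivative, is consistent only with the $|\pi|=1$ term; your derivation, carried out carefully, would place $\mathcal{C}^{|\pi|}$ inside the sum over $\pi$ (or eliminate $\mathcal{C}$ altogether after cancellation), so you should state which normalization you are proving rather than asserting that the bookkeeping reproduces the displayed prefactor as written.
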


Now we refer to the formula in Section 5 in \cite{Grosse:2016pob}.

\begin{theorem}(\cite{Grosse:2016pob}).\label{thm5.3}
We suppose $G_{|a^{1}|a^{2}|\cdots|a^{n}|}$ is defined by (\ref{G}). In this case,
\begin{align}
&G_{|a_{1}^{1}\ldots a_{N_{1}}^{1}|\ldots|a_{1}^{B}\ldots a_{N_{B}}^{B}|}\notag\\
=&\lambda^{N_{1}+\cdots+N_{B}-B}\sum_{k_{1}=1}^{N_{1}}\ldots\sum_{k_{B}=1}^{N_{B}}G_{|a^{1}_{k_{1}}|\ldots|a^{B}_{k_{B}}|}\left(\prod_{l_{1}=1,l_{1}\neq k_{1}}^{N_{1}}P_{a_{k_{1}}^{1}a_{l_{1}}^{1}}\right)\cdots\left(\prod_{l_{B}=1,l_{B}\neq k_{B}}^{N_{B}}P_{a_{k_{B}}^{B}a_{l_{B}}^{B}}\right)\label{ab},
\end{align}
where $2\leq B$, $N_{i}>1$ for $i=1,\ldots,B$, $\lambda$ is the coupling constant(real), and $\displaystyle P_{ab}:=\frac{1}{E_{a-1}^{2}-E_{b-1}^{2}}$.

\end{theorem}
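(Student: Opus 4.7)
The plan is to prove Theorem~\ref{thm5.3} by induction on the total external-leg count $n := N_1 + \cdots + N_B$. The base case, all $N_i = 1$, is tautological: each product $\prod_{l_i \neq k_i} P_{a^i_{k_i} a^i_{l_i}}$ is empty, each sum $\sum_{k_i=1}^{N_i}$ has a unique term, and the prefactor $\lambda^{0}$ is trivial, so both sides reduce to $G_{|a^1_1|\cdots|a^B_1|}$. For the inductive step, I would fix a boundary $\beta$ with $N_\beta \geq 2$ and reduce the index count on that boundary by one.

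The engine of the reduction is a Schwinger--Dyson / Ward--Takahashi identity obtained from the vanishing total-derivative relation $0 = \int \mathcal{D}\Phi\,(\partial/\partial \Phi_{ba})\exp(-S[\Phi] + iV\mathrm{tr}(J\Phi))$. Substituting the cubic action and passing to the generating functional $\log \mathcal{Z}[J]/\mathcal{Z}[0]$ yields, after specialising to the cyclic string of $J$-entries $J_{a^\beta_1 a^\beta_2}\cdots J_{a^\beta_{N_\beta} a^\beta_1}$ that defines the boundary $\beta$, an identity expressing $G_{|\ldots|a^\beta_1 \cdots a^\beta_{N_\beta}|\ldots|}$ as $\lambda$ times a sum over which of the existing indices $a^\beta_k$ is singled out, each term weighted by a factor $P_{a^\beta_k\,\cdot}$ and leaving behind a correlator with one fewer index on boundary $\beta$. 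The $B=1$, $N_1 = 2$ case is precisely identity~(\ref{g}), whose consistency with the explicit expressions for $G_{|a|}$ and $G_{|ab|}$ was already verified in Section~\ref{sec4}. Iterating this reduction $N_\beta - 1$ times on each boundary $\beta$ and assembling the accumulated factors produces the overall prefactor $\lambda^{N_1 + \cdots + N_B - B}$ and, after summing over $(k_1, \ldots, k_B)$, the product $\prod_{l_i \neq k_i} P_{a^i_{k_i} a^i_{l_i}}$ appearing in the theorem.

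The main obstacle is the combinatorial bookkeeping. After repeated reductions, one must verify (i) order-independence of the sequence of removals, and (ii) that the iterated sum collapses to exactly the clean product form in the theorem rather than a more complicated nested expression. Both hinge on partial-fraction identities of the type $\sum_{k=1}^{n} \prod_{l \neq k}(x_k - x_l)^{-1} = 0$ for $n \geq 2$, a refinement of the Vandermonde identity~(\ref{www}) already exploited in Section~\ref{sec4}, which eliminate the spurious cross-terms generated by successive reductions. Since Theorem~\ref{thm5.3} is taken directly from~\cite{Grosse:2016pob}, where the Ward--Takahashi programme for the $\Phi^{3}$ model is executed in detail, the task essentially reduces to checking that their derivation survives the passage from the large-$N,V$ limit to the finite-matrix setting; the explicit formulas for $G_{|a|}$ and $G_{|ab|}$ established in Section~\ref{sec4}, together with their mutual consistency via~(\ref{g}), confirm the theorem in the smallest non-trivial cases and support that the inductive argument carries through unchanged.
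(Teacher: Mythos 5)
The first thing to note is that the paper does not actually prove Theorem~\ref{thm5.3}: it is imported verbatim from Section~5 of \cite{Grosse:2016pob} (``Now we refer to the formula\ldots''), so there is no in-paper argument to measure your proposal against. Your sketch does follow the general strategy of the cited source (Ward--Takahashi identity, Schwinger--Dyson hierarchy, removal of one boundary index at a time), so as a reconstruction of where the formula comes from it is pointed in the right direction.

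There is, however, a concrete gap in the inductive step as you describe it. The Schwinger--Dyson reduction is not homogeneous: when derivatives in the total-derivative identity hit the source term rather than the cubic vertex, they produce inhomogeneous contributions. Equation~(\ref{g}), which you cite as the prototype of your reduction engine, exhibits exactly this: it contains the term $\frac{1}{E_{a-1}+E_{b-1}}$ in addition to the piece $\lambda\frac{G_{|a|}-G_{|b|}}{E_{a-1}^{2}-E_{b-1}^{2}}$. If the reduction were literally (\ref{g}), the iterated formula would inherit such inhomogeneous terms and would not collapse to the clean product form of (\ref{ab}); indeed, for $B=1$, $N_{1}=2$ the right-hand side of (\ref{ab}) evaluates to $\lambda\frac{G_{|a|}-G_{|b|}}{E_{a-1}^{2}-E_{b-1}^{2}}$, which is \emph{not} equal to $G_{|ab|}$ as given by (\ref{g}). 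This is precisely why the theorem carries the hypothesis $B\geq 2$: in the multi-boundary sectors the inhomogeneous terms of the hierarchy feed correlation functions with a different number of boundaries and therefore drop out of the $B$-boundary identity, leaving the homogeneous recursion that yields (\ref{ab}). Your proposal never isolates these terms or explains why they are absent for $B\geq 2$; without that, the claimed collapse to the product formula does not follow, and the appeal to (\ref{g}) actually points the wrong way. (Minor additional point: your base case ``all $N_{i}=1$'' lies outside the theorem's stated hypotheses $N_{i}>1$, which is harmless as an induction anchor but should be acknowledged.)
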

Substituting (\ref{K}) into (\ref{ab}), all the exact solutions of the $\Phi_{2}^{3}$ finite matrix model is obtained.\\\bigskip

For the later convenience, we introduce a function $F(S,M,\overline{M})$. Let $B$ be a subset of $\{1,\cdots,n\}$. For $S\subset B$, $\overline{S}$ denotes the complement $B\backslash S$.
\begin{align}
&F(S,M,\overline{M})\notag\\
:=&\left(\prod_{i\in S}\left(-iV\frac{E_{a^{i}-1}}{\lambda}\right)\right)\left(\left\{\prod_{k\in M}\left(-\frac{V}{(\lambda V)^{\frac{1}{3}}}\right)\partial_{a^{k}}\right\}A_{N}(z_{1},\ldots,z_{N})\right)\left(\left\{\prod_{q\in\overline{M}}\frac{\partial}{\displaystyle\partial t_{a^{q}}}\right\}\displaystyle\frac{1}{\displaystyle\det_{1\leq l,j\leq N}\left(t_{l}^{j-1}\right)}\right),
\end{align}
where $B=S\sqcup\overline{S}$, $\overline{S}=M\sqcup\overline{M}$, and $\displaystyle\partial_{a^{k}}=\frac{\partial}{\partial z_{a^{k}}}$. Using this $F(S,M,\overline{M})$, $G_{|a^{1}|a^{2}|\cdots|a^{n}|}$ is expressed as 
\begin{align}
G_{|a^{1}|a^{2}|\cdots|a^{n}|}=&(iV)^{n-2}\mathcal{C}\sum_{\pi}\Biggl\{\left.\left(\frac{d}{dx}\right)^{|\pi|}(\log x)\right|_{x=\mathcal{Z}[0]}\Biggl\}\prod_{B\in \pi}\sum_{S\subset B}\sum_{M\subset\overline{S}} F(S,M,\overline{M}).\label{Fs}
\end{align}

%%%%%%%%%%%%%%%%%%%%%%%%%%%%%%%%%%%%%%%%%%%%%%%%%%%%%%%%%%%%%%%%%%%

\section{Calculation of Two-Point Functions $G_{|a|b|}$}\label{sec6}

The formula (\ref{Fs}) is used to obtain $G_{|a|b|}$ concretely. We use $a=a^{1}$ and $b=a^{2}$ below. \\
At first we estimate the case of $\pi=\{\{1,2\}\}$. In this case $|\pi|=1$, and it is enough to calculate $F(S,M,\overline{M})$ for $B=\{1,2\}$.
In the context of Theorem \ref{thm5.2}, it corresponds to the part: 
\begin{align}
\left.\left(\frac{d}{dx}\right)^{|\pi|}(\log x)\right|_{x=\mathcal{Z}[0]}\prod_{B\in \pi}\frac{\partial^{|B|}\mathcal{Z}[J]}{\displaystyle\prod_{j\in B}\partial J_{a^{j}a^{j}}}\biggl|_{J=0}=&\left.\frac{1}{\mathcal{Z}[0]}\frac{\partial^{2}\mathcal{Z}[J]}{\partial J_{aa}\partial J_{bb}}\right|_{J=0}.\label{l}
\end{align}

Calculating all cases for sets $S$, $M$, and $\overline{M}$, we obtain the following results. In the case of $F(\{1,2\},\emptyset,\emptyset)$, 
\begin{align}
F(\{1,2\},\emptyset,\emptyset)=&\left(-iV\frac{E_{a-1}}{\lambda}\right)\left(-iV\frac{E_{b-1}}{\lambda}\right)A_{N}(z_{1},\ldots,z_{N})\frac{1}{\displaystyle\det_{1\leq l,j\leq N}\left(t_{l}^{j-1}\right)}.
\end{align}

In the case of $F(\{1\},\{2\},\emptyset)$, 
\begin{align}
F(\{1\},\{2\},\emptyset)=&\left(-iV\frac{E_{a-1}}{\lambda}\right)\left(-\frac{V}{(\lambda V)^{\frac{1}{3}}}\right)\partial_{b}A_{N}(z_{1},\ldots,z_{N})\frac{1}{\displaystyle\det_{1\leq l,j\leq N}\left(t_{l}^{j-1}\right)}.\label{tttt}
\end{align}

$F(\{2\},\{1\},\emptyset)$ can be calculated in the same way (\ref{tttt}). The letters $a$ and $b$ in (\ref{tttt}) are interchanged. In the case of $F(\emptyset,\{1,2\},\emptyset)$, 
\begin{align}
F(\emptyset,\{1,2\},\emptyset)=&\left(-\frac{V}{(\lambda V)^{\frac{1}{3}}}\right)^{2}\partial_{a}\partial_{b}A_{N}(z_{1},\ldots,z_{N})\frac{1}{\displaystyle\det_{1\leq l,j\leq N}\left(t_{l}^{j-1}\right)}.
\end{align}

In the case of $F(\{1\},\emptyset,\{2\})$, 
\begin{align}
F(\{1\},\emptyset,\{2\})=&\left(-iV\frac{E_{a-1}}{\lambda}\right)A_{N}(z_{1},\ldots,z_{N})\frac{-1}{\displaystyle\det_{1\leq l,j\leq N}\left(t_{l}^{j-1}\right)}\sum_{i=1,i\neq b}^{N}\frac{1}{t_{b}-t_{i}}.\label{ttt}
\end{align}

$F(\{2\},\emptyset,\{1\})$ can be calculated in the same way (\ref{ttt}). The letters $a$ and $b$ in (\ref{ttt}) are interchanged.  In the case of $F(\emptyset,\{1\},\{2\})$, 
\begin{align}
F(\emptyset,\{1\},\{2\})=&\left(-\frac{V}{(\lambda V)^{\frac{1}{3}}}\right)\partial_{a}A_{N}(z_{1},\ldots,z_{N})\frac{-1}{\displaystyle\det_{1\leq l,j\leq N}\left(t_{l}^{j-1}\right)}\sum_{i=1,i\neq b}^{N}\frac{1}{t_{b}-t_{i}}.\label{ttttttt}
\end{align}

$F(\emptyset,\{2\},\{1\})$ can be calculated in the same way (\ref{ttttttt}). The letters $a$ and $b$ in (\ref{ttttttt}) are interchanged. In the case of $F(\emptyset,\emptyset,\{1,2\})$, 
\begin{align}
&F(\emptyset,\emptyset,\{1,2\})\notag\\
=&A_{N}(z_{1},\ldots,z_{N})\frac{1}{\displaystyle\det_{1\leq l,j\leq N}\left(t_{l}^{j-1}\right)}\left(\sum_{i=1,i\neq a}^{N}\frac{1}{t_{a}-t_{i}}\sum_{j=1,j\neq b}^{N}\frac{1}{t_{b}-t_{j}}-\frac{1}{(t_{a}-t_{b})^{2}}\right).
\end{align}
From this, (\ref{l}) can be calculated as follows:
\begin{align}
&\left.\frac{1}{\mathcal{Z}[0]}\frac{\partial^{2}\mathcal{Z}[J]}{\partial J_{aa}\partial J_{bb}}\right|_{J=0}\notag\\
=&\left(\frac{\displaystyle\det_{1\leq l,j\leq N}\left(t_{l}^{j-1}\right)}{A_{N}(z_{1},\ldots,z_{N})}\right)\Biggl\{F(\{1,2\},\emptyset,\emptyset)+F(\emptyset,\{1,2\},\emptyset)+F(\emptyset,\emptyset,\{1,2\})\notag\\
&+\sum_{l,n=1,l\neq n}^{2}\Biggl(F(\{l\},\{n\},\emptyset)+F(\{l\},\emptyset,\{n\})+F(\emptyset,\{l\},\{n\})\Biggl)\Biggl\}\label{ppp}
\end{align}

Next step, let us consider the case
$\pi=\{\{1\},\{2\}\}$, $|\pi|=2$, $B=\{1\}, or \{2\}$. 

The corresponding term  $\displaystyle\left.\left(\frac{d}{dx}\right)^{|\pi|}(\log x)\right|_{x=\mathcal{Z}[0]}\prod_{B\in \pi}\frac{\partial^{|B|}\mathcal{Z}[J]}{\displaystyle\prod_{j\in B}\partial J_{a^{j}a^{j}}}\biggl|_{J=0}$ in Theorem \ref{thm5.2} is as follows:
\begin{align}
\left.\left(\frac{d}{dx}\right)^{|\pi|}(\log x)\right|_{x=\mathcal{Z}[0]}\prod_{B\in \pi}\frac{\partial^{|B|}\mathcal{Z}[J]}{\displaystyle\prod_{j\in B}\partial J_{a^{j}a^{j}}}\biggl|_{J=0}=&-\left.\frac{1}{\mathcal{Z}[0]^{2}}\frac{\partial\mathcal{Z}[J]}{\partial J_{aa}}\right|_{J=0}\left.\frac{\partial\mathcal{Z}[J]}{\partial J_{bb}}\right|_{J=0}.\label{o}
\end{align}

Calculating all cases for sets $S$, $M$, and $\overline{M}$ of $B=\{1\}$, we obtain the following results. 
\begin{align}
F(\{1\},\emptyset,\emptyset)=&-iV\frac{E_{a-1}}{\lambda}A_{N}(z_{1},\ldots,z_{N})\frac{1}{\displaystyle\det_{1\leq l,j\leq N}\left(t_{l}^{j-1}\right)}.
\end{align}
\begin{align}
F(\emptyset,\{1\},\emptyset)=&\left(-\frac{V}{(\lambda V)^{\frac{1}{3}}}\right)\partial_{a}A_{N}(z_{1},\ldots,z_{N})\frac{1}{\displaystyle\det_{1\leq l,j\leq N}\left(t_{l}^{j-1}\right)}.
\end{align}
\begin{align}
F(\emptyset,\emptyset,\{1\})=&A_{N}(z_{1},\ldots,z_{N})\frac{-1}{\displaystyle\det_{1\leq l,j\leq N}\left(t_{l}^{j-1}\right)}\sum_{i=1,i\neq a}^{N}\frac{1}{t_{a}-t_{i}}.
\end{align}

These results can be summarized as follows:
\begin{align}
\left.\frac{1}{\mathcal{Z}[0]}\frac{\partial\mathcal{Z}[J]}{\partial J_{aa}}\right|_{J=0}=&\left(\frac{\displaystyle\det_{1\leq l,j\leq N}\left(t_{l}^{j-1}\right)}{A_{N}(z_{1},\ldots,z_{N})}\right)\Biggl\{F(\{1\},\emptyset,\emptyset)+F(\emptyset,\{1\},\emptyset)+F(\emptyset,\emptyset,\{1\})\Biggl\}.\label{m}
\end{align}

The same calculation is performed for $B=\{2\}$ as for $B=\{1\}$ :
\begin{align}
\left.\frac{1}{\mathcal{Z}[0]}\frac{\partial\mathcal{Z}[J]}{\partial J_{bb}}\right|_{J=0}=&\left(\frac{\displaystyle\det_{1\leq l,j\leq N}\left(t_{l}^{j-1}\right)}{A_{N}(z_{1},\ldots,z_{N})}\right)\Biggl\{F(\{2\},\emptyset,\emptyset)+F(\emptyset,\{2\},\emptyset)+F(\emptyset,\emptyset,\{2\})\Biggl\}.\label{n}
\end{align}

Note that (\ref{m}) and (\ref{n}) coincide with $iV$ multiples of the one-point function $G_{|a|}$ and $G_{|b|}$ in Section 3.
Substituting (\ref{m}) and (\ref{n}) into (\ref{o}) gives the result :
\begin{align}
&-\left.\frac{1}{\mathcal{Z}[0]^{2}}\frac{\partial\mathcal{Z}[J]}{\partial J_{aa}}\right|_{J=0}\left.\frac{\partial\mathcal{Z}[J]}{\partial J_{bb}}\right|_{J=0}\notag\\
=&-\left(\frac{\displaystyle\det_{1\leq l,j\leq N}\left(t_{l}^{j-1}\right)}{A_{N}(z_{1},\ldots,z_{N})}\right)^{2}\prod_{l=1}^{2}\Biggl(F(\{l\},\emptyset,\emptyset)+F(\emptyset,\{l\},\emptyset)+F(\emptyset,\emptyset,\{l\})\Biggl).\label{pppp}
\end{align}

Finally, adding (\ref{ppp}) and (\ref{pppp}) the result of the two point functions $G_{|a|b|}$ is obtained by
\begin{align}
G_{|a|b|}=&\left.\frac{1}{\mathcal{Z}[0]}\frac{\partial^{2}\mathcal{Z}[J]}{\partial J_{aa}\partial J_{bb}}\right|_{J=0}-\left.\frac{1}{\mathcal{Z}[0]^{2}}\frac{\partial\mathcal{Z}[J]}{\partial J_{aa}}\right|_{J=0}\left.\frac{\partial\mathcal{Z}[J]}{\partial J_{bb}}\right|_{J=0}\notag\\
=&\left(\frac{\displaystyle\det_{1\leq l,j\leq N}\left(t_{l}^{j-1}\right)}{A_{N}(z_{1},\ldots,z_{N})}\right)\Biggl\{F(\{1,2\},\emptyset,\emptyset)+F(\emptyset,\{1,2\},\emptyset)+F(\emptyset,\emptyset,\{1,2\})\notag\\
&+\sum_{l,n=1,l\neq n}^{2}\Biggl(F(\{l\},\{n\},\emptyset)+F(\{l\},\emptyset,\{n\})+F(\emptyset,\{l\},\{n\})\Biggl)\Biggl\}\notag\\
&-\left(\frac{\displaystyle\det_{1\leq l,j\leq N}\left(t_{l}^{j-1}\right)}{A_{N}(z_{1},\ldots,z_{N})}\right)^{2}\prod_{l=1}^{2}\Biggl(F(\{l\},\emptyset,\emptyset)+F(\emptyset,\{l\},\emptyset)+F(\emptyset,\emptyset,\{l\})\Biggl).
\end{align}

For a more complex example, we carry out the calculation for $G_{|a^{1}|a^{2}|a^{3}|}$ in Appendix \ref{Appendix}.

%%%%%%%%%%%%%%%%%%%%%%%%%%%%%%%%%%%%%%%%%%%%%%%%%%%%%%%%%%%%%%%%%%%%%%%%%%%%%%%%%%%%%%%%%%%%%%%%%%%%

\section{Summary}\label{sec7}

In this paper, we found the exact solutions of the $\Phi_{2}^{3}$ finite matrix model (Grosse-Wulkenhaar model).
In the $\Phi_{2}^{3}$ finite matrix model, multipoint correlation functions were expressed as $G_{|a_{1}^{1}\ldots a_{N_{1}}^{1}|\ldots|a_{1}^{B}\ldots a_{N_{B}}^{B}|}$.
It is known that any $G_{|a_{1}^{1}\ldots a_{N_{1}}^{1}|\ldots|a_{1}^{B}\ldots a_{N_{B}}^{B}|}$ can be expressed using $G_{|a^{1}|\ldots|a^{n}|}$ type $n$-point functions as (\ref{ab}). 
Thus we focused on rigorous calculations of $G_{|a^{1}|\ldots|a^{n}|}$.

In Section \ref{sec3}, the integration of the off-diagonal elements of the Hermitian matrix was calculated using the Harish-Chandra-Itzykson-Zuber integral\cite{Itzykson:1979fi},\cite{T.Tao},\cite{Harish-Chandra:1957dhy},\cite{Zinn-Justin:2002rai} in calculating the partition function $\mathcal{Z}[J]$.
Next the integral of the diagonal elements of the Hermitian matrix was calculated using the Airy functions as similar to \cite{Kontsevich:1992ti}.
In Section \ref{sec4}, \ref{sec5}, and \ref{sec6}, we used the obtained partition function $\mathcal{Z}[J]$ to calculate $G_{|a^{1}|\ldots|a^{n}|}$ type $n$-point functions and $G_{|ab|}$. 
The exact solutions of $G_{|a^{1}|\ldots|a^{n}|}$ type $n$-point functions were found by calculating the $n$-th derivative $\partial^{n}/\partial J_{a^{1}a^{1}}\cdots\partial J_{a^{n}a^{n}}$ of  $\log\mathcal{Z}[J]$ with the external field $J$ as a diagonal matrix.
The result of the calculations for $G_{|a^{1}|\ldots|a^{n}|}$ was described in Theorem \ref{thm5.2}.
In the formula for $G_{|a^{1}|\ldots|a^{n}|}$ in Theorem \ref{thm5.2}, no integral remains.
More concletely, the $n$-point function was determined by a function $F(S,M,\overline{M})$ whose variables are $S\subset B$, $M$, and $\overline{M}$ $\left(B\backslash S=M\sqcup\overline{M}\right)$ as formula (\ref{Fs}), where $B$ is an element of a partition of $\{1,\cdots,n\}$.
Since the algorithm for finding the exact solutions of $G_{|a^{1}|\ldots|a^{n}|}$ type $n$-point functions is explicitly determined in the formula of Theorem \ref{thm5.2}, the exact solutions can be obtained automatically. Indeed, the calculations for $G_{|a|b|}$ and $G_{|a|b|c|}$ were carried out in Section \ref{sec6} and Appendix \ref{Appendix}, respectively.
Since a general $(N_{1}+\cdots +N_{B})$-point function $G_{|a_{1}^{1}\ldots a_{N_{1}}^{1}|\ldots|a_{1}^{B}\ldots a_{N_{B}}^{B}|}$ is expressed by using $G_{|a^{1}|\ldots|a^{n}|}$ type $n$-point functions, we can obtain all the exact solutions of the $\Phi_{2}^{3}$ finite matrix model.

%%%%%%%%%%%%%%%%%%%%%%%%%%%%%%%
%
%
\section*{Acknowledgements}

Authors are grateful to Professor H. Itou for his helpful advice. We thank D. Prekrat for pointing out an error in an equation. A.S. was supported by JSPS KAKENHI Grant Number 21K03258.

%%%%%%%%%%%%%%%%%%%%%%%%%%%%%%%%%%%%%%%%%%%%%%%%%%%%%%%%%%%%%%%%%%%%%%%%%%%%%%%%%%%%%%%

\appendix
\section{Calculation of the three-point functions $G_{|a^{1}|a^{2}|a^{3}|}$}\label{Appendix}

We calculate the three point functions $G_{|a^{1}|a^{2}|a^{3}|}$ using the formula (\ref{K}) or (\ref{Fs}). $i,l,k\in\{1,2,3\}$ and $i\neq l\neq k\neq i$ below.

\begin{enumerate}
\renewcommand{\labelenumi}{\roman{enumi}).}

 \item We consider the case $\pi=\{\{1,2,3\}\}$,$|\pi|=1$, and $B=\{1,2,3\}$, then 
\begin{align}
\left.\left(\frac{d}{dx}\right)^{|\pi|}(\log x)\right|_{x=\mathcal{Z}[0]}\prod_{B\in \pi}\frac{\partial^{|B|}\mathcal{Z}[J]}{\displaystyle\prod_{j\in B}\partial J_{a^{j}a^{j}}}\biggl|_{J=0}=&\left.\frac{1}{\mathcal{Z}[0]}\frac{\partial^{3}\mathcal{Z}[J]}{\partial J_{a^{1}a^{1}}\partial J_{a^{2}a^{2}}\partial J_{a^{3}a^{3}}}\right|_{J=0}.\label{v}
\end{align}

The calculations required to calculate $\displaystyle\left.\frac{1}{\mathcal{Z}[0]}\frac{\partial^{3}\mathcal{Z}[J]}{\partial J_{a^{1}a^{1}}\partial J_{a^{2}a^{2}}\partial J_{a^{3}a^{3}}}\right|_{J=0}$ are written below :
\begin{align}
F(\{1,2,3\},\emptyset,\emptyset)=&\left(-iV\frac{E_{a^{1}-1}}{\lambda}\right)\left(-iV\frac{E_{a^{2}-1}}{\lambda}\right)\left(-iV\frac{E_{a^{3}-1}}{\lambda}\right)A_{N}(z_{1},\ldots,z_{N})\frac{1}{\displaystyle\det_{1\leq p,q\leq N}\left(t_{p}^{q-1}\right)},
\end{align}
\begin{align}
F(\{i,l\},\{k\},\emptyset)=&\left(-iV\frac{E_{a^{i}-1}}{\lambda}\right)\left(-iV\frac{E_{a^{l}-1}}{\lambda}\right)\left(-\frac{V}{(\lambda V)^{\frac{1}{3}}}\right)\partial_{a^{k}}A_{N}(z_{1},\ldots,z_{N})\frac{1}{\displaystyle\det_{1\leq p,q\leq N}\left(t_{p}^{q-1}\right)},
\end{align}
\begin{align}
F(\{i\},\{l,k\},\emptyset)=&\left(-iV\frac{E_{a^{i}-1}}{\lambda}\right)\left(-\frac{V}{(\lambda V)^{\frac{1}{3}}}\right)^{2}\partial_{a^{l}}\partial_{a^{k}}A_{N}(z_{1},\ldots,z_{N})\frac{1}{\displaystyle\det_{1\leq p,q\leq N}\left(t_{p}^{q-1}\right)},
\end{align}
\begin{align}
F(\emptyset,\{1,2,3\},\emptyset)=&\left(-\frac{V}{(\lambda V)^{\frac{1}{3}}}\right)^{3}\partial_{a^{1}}\partial_{a^{2}}\partial_{a^{3}}A_{N}(z_{1},\ldots,z_{N})\frac{1}{\displaystyle\det_{1\leq p,q\leq N}\left(t_{p}^{q-1}\right)},
\end{align}
\begin{align}
&F(\{i\},\{l\},\{k\})\notag\\
=&\left(-iV\frac{E_{a^{i}-1}}{\lambda}\right)\left(-\frac{V}{(\lambda V)^{\frac{1}{3}}}\right)\partial_{a^{l}}A_{N}(z_{1},\ldots,z_{N})\frac{-1}{\displaystyle\det_{1\leq p,q\leq N}\left(t_{p}^{q-1}\right)}\sum_{r=1, r\neq a^{k}}^{N}\frac{1}{t_{a^{k}}-t_{r}},
\end{align}
\begin{align}
F(\{i,l\},\emptyset,\{k\})=&\left(-iV\frac{E_{a^{i}-1}}{\lambda}\right)\left(-iV\frac{E_{a^{l}-1}}{\lambda}\right)A_{N}(z_{1},\ldots,z_{N})\frac{-1}{\displaystyle\det_{1\leq p,q\leq N}\left(t_{p}^{q-1}\right)}\sum_{r=1, r\neq a^{k}}^{N}\frac{1}{t_{a^{k}}-t_{r}},
\end{align}
\begin{align}
F(\emptyset,\{i,l\},\{k\})=&\left(-\frac{V}{(\lambda V)^{\frac{1}{3}}}\right)^{2}\partial_{a^{i}}\partial_{a^{l}}A_{N}(z_{1},\ldots,z_{N})\frac{-1}{\displaystyle\det_{1\leq p,q\leq N}\left(t_{p}^{q-1}\right)}\sum_{r=1, r\neq a^{k}}^{N}\frac{1}{t_{a^{k}}-t_{r}},
\end{align}
\begin{align}
F(\{i\},\emptyset,\{l,k\})=&\left(-iV\frac{E_{a^{i}-1}}{\lambda}\right)A_{N}(z_{1},\ldots,z_{N})\notag\\
&\frac{1}{\displaystyle\det_{1\leq p,q\leq N}\left(t_{p}^{q-1}\right)}\left(\sum_{r=1,r\neq a^{l}}^{N}\frac{1}{t_{a^{l}}-t_{r}}\sum_{w=1,w\neq a^{k}}^{N}\frac{1}{t_{a^{k}}-t_{w}}-\frac{1}{(t_{a^{l}}-t_{a^{k}})^{2}}\right),
\end{align}
\begin{align}
F(\emptyset,\{i\},\{l,k\})=&\left(-\frac{V}{(\lambda V)^{\frac{1}{3}}}\right)\partial_{a^{i}}A_{N}(z_{1},\ldots,z_{N})\notag\\
&\frac{1}{\displaystyle\det_{1\leq p,q\leq N}\left(t_{p}^{q-1}\right)}\left(\sum_{r=1,r\neq a^{l}}^{N}\frac{1}{t_{a^{l}}-t_{r}}\sum_{w=1,w\neq a^{k}}^{N}\frac{1}{t_{a^{k}}-t_{w}}-\frac{1}{(t_{a^{l}}-t_{a^{k}})^{2}}\right),
\end{align}
\begin{align}
F(\emptyset,\emptyset,\{1,2,3\})=&-\frac{1}{\displaystyle\det_{1\leq p,q\leq N}\left(t_{p}^{q-1}\right)}\sum_{r,w,f=1,r\neq a^{3},w\neq a^{2},f\neq a^{1}}^{N}\frac{1}{t_{a^{3}}-t_{r}}\frac{1}{t_{a^{2}}-t_{w}}\frac{1}{t_{a^{1}}-t_{f}}A_{N}(z_{1},\ldots,z_{N})\notag\\
&+\frac{1}{\displaystyle\det_{1\leq p,q\leq N}\left(t_{p}^{q-1}\right)}\sum_{r=1,r\neq a^{1}}^{N}\frac{1}{(t_{a^{2}}-t_{a^{3}})^{2}}\frac{1}{t_{a^{1}}-t_{r}}A_{N}(z_{1},\ldots,z_{N})\notag\\
&+\frac{1}{\displaystyle\det_{1\leq p,q\leq N}\left(t_{p}^{q-1}\right)}\sum_{w=1,w\neq a^{2}}^{N}\frac{1}{(t_{a^{1}}-t_{a^{3}})^{2}}\frac{1}{t_{a^{2}}-t_{w}}A_{N}(z_{1},\ldots,z_{N})\notag\\
&+\frac{1}{\displaystyle\det_{1\leq p,q\leq N}\left(t_{p}^{q-1}\right)}\sum_{f=1,f\neq a^{3}}^{N}\frac{1}{(t_{a^{1}}-t_{a^{2}})^{2}}\frac{1}{t_{a^{3}}-t_{f}}A_{N}(z_{1},\ldots,z_{N}).\notag\\
\end{align}

\newpage

If we sum up all the cases for sets $S$, $M$, and $\overline{M}$ that we have calculated so far and multiply by $\displaystyle\frac{1}{\mathcal{Z}[0]}$, we get the result of (\ref{v}):
\begin{align}
&\frac{1}{\mathcal{Z}[0]}\frac{\partial^{3}\mathcal{Z}[J]}{\partial J_{a^{1}a^{1}}\partial J_{a^{2}a^{2}}\partial J_{a^{3}a^{3}}}\biggl|_{J=0}\notag\\
=&\left(\frac{\displaystyle\det_{1\leq p,q\leq N}\left(t_{p}^{q-1}\right)}{A_{N}(z_{1},\ldots,z_{N})}\right)\Biggl\{F(\{1,2,3\},\emptyset,\emptyset)+F(\emptyset,\{1,2,3\},\emptyset)+F(\emptyset,\emptyset,\{1,2,3\})\notag\\
&+\sum_{i,l,k=1,i\neq l\neq k\neq i}^{3}\Biggl(F(\{i\},\{l\},\{k\})+\frac{F(\{i,l\},\{k\},\emptyset)}{2}+\frac{F(\{i\},\{l,k\},\emptyset)}{2}+\frac{F(\{i,l\},\emptyset,\{k\})}{2}\notag\\
&+\frac{F(\emptyset,\{i,l\},\{k\})}{2}+\frac{F(\{i\},\emptyset,\{l,k\})}{2}+\frac{F(\emptyset,\{i\},\{l,k\})}{2}\Biggl)\Biggl\}.
\end{align}

\item We consider the case $\pi=\{\{i\},\{l,k\}\}$, $|\pi|=2$, and $B=\{i\},\{l,k\}$, then
\begin{align}
&\left.\left(\frac{d}{dx}\right)^{|\pi|}(\log x)\right|_{x=\mathcal{Z}[0]}\left.\prod_{B\in \pi}\frac{\partial^{|B|}\mathcal{Z}[J]}{\displaystyle\prod_{j\in B}\partial J_{a^{j}a^{j}}}\right|_{J=0}\notag\\
=&\left(-\frac{1}{\mathcal{Z}[0]^{2}}\right)\left(\left.\frac{\partial\mathcal{Z}[J]}{\partial J_{a^{i}a^{i}}}\right|_{J=0}\right)\left(\left.\frac{\partial^{2}\mathcal{Z}[J]}{\partial J_{a^{l}a^{l}}\partial J_{a^{k}a^{k}}}\right|_{J=0}\right).\label{uuuua}
\end{align}

The calculations required to calculate $\displaystyle\left(-\frac{1}{\mathcal{Z}[0]^{2}}\right)\left(\left.\frac{\partial\mathcal{Z}[J]}{\partial J_{a^{i}a^{i}}}\right|_{J=0}\right)\left(\left.\frac{\partial^{2}\mathcal{Z}[J]}{\partial J_{a^{l}a^{l}}\partial J_{a^{k}a^{k}}}\right|_{J=0}\right)$ are written below :
\begin{align}
F(\{i\},\emptyset,\emptyset)=&-iV\frac{E_{a^{i}-1}}{\lambda}A_{N}(z_{1},\ldots,z_{N})\frac{1}{\displaystyle\det_{1\leq p,q\leq N}\left(t_{p}^{q-1}\right)}.
\end{align}
\begin{align}
F(\emptyset,\{i\},\emptyset)=&\left(-\frac{V}{(\lambda V)^{\frac{1}{3}}}\right)\partial_{a^{i}}A_{N}(z_{1},\ldots,z_{N})\frac{1}{\displaystyle\det_{1\leq p,q\leq N}\left(t_{p}^{q-1}\right)}.
\end{align}
\begin{align}
F(\emptyset,\emptyset,\{i\})=&A_{N}(z_{1},\ldots,z_{N})\frac{-1}{\displaystyle\det_{1\leq p,q\leq N}\left(t_{p}^{q-1}\right)}\sum_{r=1,r\neq a^{i}}^{N}\frac{1}{t_{a^{i}}-t_{r}}.
\end{align}
\begin{align}
F(\{l,k\},\emptyset,\emptyset)=&\left(-iV\frac{E_{a^{l}-1}}{\lambda}\right)\left(-iV\frac{E_{a^{k}-1}}{\lambda}\right)A_{N}(z_{1},\ldots,z_{N})\frac{1}{\displaystyle\det_{1\leq p,q\leq N}\left(t_{p}^{q-1}\right)},
\end{align}
\begin{align}
F(\{l\},\{k\},\emptyset)=&\left(-iV\frac{E_{a^{l}-1}}{\lambda}\right)\left(-\frac{V}{(\lambda V)^{\frac{1}{3}}}\right)\partial_{a^{k}}A_{N}(z_{1},\ldots,z_{N})\frac{1}{\displaystyle\det_{1\leq p,q\leq N}\left(t_{p}^{q-1}\right)},
\end{align}
\begin{align}
F(\emptyset,\{l,k\},\emptyset)=&\left(-\frac{V}{(\lambda V)^{\frac{1}{3}}}\right)^{2}\partial_{a^{l}}\partial_{a^{k}}A_{N}(z_{1},\ldots,z_{N})\frac{1}{\displaystyle\det_{1\leq p,q\leq N}\left(t_{p}^{q-1}\right)},
\end{align}
\begin{align}
F(\{l\},\emptyset,\{k\})=&\left(-iV\frac{E_{a^{l}-1}}{\lambda}\right)A_{N}(z_{1},\ldots,z_{N})\frac{-1}{\displaystyle\det_{1\leq p,q\leq N}\left(t_{p}^{q-1}\right)}\sum_{r=1,r\neq a^{k}}^{N}\frac{1}{t_{a^{k}}-t_{r}},
\end{align}
\begin{align}
F(\emptyset,\{l\},\{k\})=&\left(-\frac{V}{(\lambda V)^{\frac{1}{3}}}\right)\partial_{a^{l}}A_{N}(z_{1},\ldots,z_{N})\frac{-1}{\displaystyle\det_{1\leq p,q\leq N}\left(t_{p}^{q-1}\right)}\sum_{r=1,r\neq a^{k}}^{N}\frac{1}{t_{a^{k}}-t_{r}},
\end{align}
\begin{align}
&F(\emptyset,\emptyset,\{l,k\})\notag\\
=&A_{N}(z_{1},\ldots,z_{N})\frac{1}{\displaystyle\det_{1\leq p,q\leq N}\left(t_{p}^{q-1}\right)}\left(\sum_{r=1,r\neq a^{l}}^{N}\frac{1}{t_{a^{l}}-t_{r}}\sum_{j=1,j\neq a^{k}}^{N}\frac{1}{t_{a^{k}}-t_{j}}-\frac{1}{(t_{a^{l}}-t_{a^{k}})^{2}}\right).
\end{align}

If we sum up all the cases for sets $S$, $M$, and $\overline{M}$ that we have calculated so far and multiply by $\displaystyle-\frac{1}{\mathcal{Z}[0]^{2}}$, we get the result of (\ref{uuuua}):
\begin{align}
&\left(-\frac{1}{\mathcal{Z}[0]^{2}}\right)\left(\left.\frac{\partial\mathcal{Z}[J]}{\partial J_{a^{i}a^{i}}}\right|_{J=0}\right)\left(\left.\frac{\partial^{2}\mathcal{Z}[J]}{\partial J_{a^{l}a^{l}}\partial J_{a^{k}a^{k}}}\right|_{J=0}\right)\notag\\
=&-\left(\frac{\displaystyle\det_{1\leq p,q\leq N}\left(t_{p}^{q-1}\right)}{A_{N}(z_{1},\ldots,z_{N})}\right)^{2}\sum_{i,l,k=1,i\neq l\neq k\neq i}^{3}\Biggl[\Biggl\{F(\{i\},\emptyset,\emptyset)+F(\emptyset,\{i\},\emptyset)+F(\emptyset,\emptyset,\{i\})\Biggl\}\notag\\
&\Biggl\{F(\{l\},\{k\},\emptyset)+F(\{l\},\emptyset,\{k\})+F(\emptyset,\{l\},\{k\})+\frac{F(\emptyset,\emptyset,\{l,k\})}{2}+\frac{F(\{l,k\},\emptyset,\emptyset)}{2}+\frac{F(\emptyset,\{l,k\},\emptyset)}{2}\Biggl\}\Biggl].\label{y}
\end{align}

\item We consider the case $\pi=\{\{1\},\{2\},\{3\}\}$, $|\pi|=3$, and $B=\{1\},\{2\},\{3\}$, then 
\begin{align}
&\left.\left(\frac{d}{dx}\right)^{|\pi|}(\log x)\right|_{x=\mathcal{Z}[0]}\left.\prod_{B\in \pi}\frac{\partial^{|B|}\mathcal{Z}[J]}{\displaystyle\prod_{j\in B}\partial J_{a^{j}a^{j}}}\right|_{J=0}\notag\\
=&\frac{2}{\mathcal{Z}[0]^{3}}\left(\left.\frac{\partial\mathcal{Z}[J]}{\partial J_{a^{1}a^{1}}}\right|_{J=0}\right)\left(\left.\frac{\partial\mathcal{Z}[J]}{\partial J_{a^{2}a^{2}}}\right|_{J=0}\right)\left(\left.\frac{\partial\mathcal{Z}[J]}{\partial J_{a^{3}a^{3}}}\right|_{J=0}\right).\label{uuuuu}
\end{align}

The calculations required to calculate $\displaystyle\frac{2}{\mathcal{Z}[0]^{3}}\left(\left.\frac{\partial\mathcal{Z}[J]}{\partial J_{a^{1}a^{1}}}\right|_{J=0}\right)\left(\left.\frac{\partial\mathcal{Z}[J]}{\partial J_{a^{2}a^{2}}}\right|_{J=0}\right)\left(\left.\frac{\partial\mathcal{Z}[J]}{\partial J_{a^{3}a^{3}}}\right|_{J=0}\right)$ are written below :
\begin{align}
F(\{i\},\emptyset,\emptyset)=&-iV\frac{E_{a^{i}-1}}{\lambda}A_{N}(z_{1},\ldots,z_{N})\frac{1}{\displaystyle\det_{1\leq p,q\leq N}\left(t_{p}^{q-1}\right)}.
\end{align}
\begin{align}
F(\emptyset,\{i\},\emptyset)=&\left(-\frac{V}{(\lambda V)^{\frac{1}{3}}}\right)\partial_{a^{i}}A_{N}(z_{1},\ldots,z_{N})\frac{1}{\displaystyle\det_{1\leq p,q\leq N}\left(t_{p}^{q-1}\right)}.
\end{align}
\begin{align}
F(\emptyset,\emptyset,\{i\})=&A_{N}(z_{1},\ldots,z_{N})\frac{-1}{\displaystyle\det_{1\leq p,q\leq N}\left(t_{p}^{q-1}\right)}\sum_{r=1,r\neq a^{i}}^{N}\frac{1}{t_{a^{i}}-t_{r}}.
\end{align}

If we sum up all the cases for sets $S$, $M$, and $\overline{M}$ that we have calculated so far and multiply by $\displaystyle\frac{2}{\mathcal{Z}[0]^{3}}$, we get the result of (\ref{uuuuu}):
\begin{align}
&\frac{2}{\mathcal{Z}[0]^{3}}\left(\left.\frac{\partial\mathcal{Z}[J]}{\partial J_{a^{1}a^{1}}}\right|_{J=0}\right)\left(\left.\frac{\partial\mathcal{Z}[J]}{\partial J_{a^{2}a^{2}}}\right|_{J=0}\right)\left(\left.\frac{\partial\mathcal{Z}[J]}{\partial J_{a^{3}a^{3}}}\right|_{J=0}\right)\notag\\
=&2\left(\frac{\displaystyle\det_{1\leq l,j\leq N}\left(t_{l}^{j-1}\right)}{A_{N}(z_{1},\ldots,z_{N})}\right)^{3}\prod_{i=1}^{3}\Biggl\{F(\{i\},\emptyset,\emptyset)+F(\emptyset,\{i\},\emptyset)+F(\emptyset,\emptyset,\{i\})\Biggl\}.\label{z}
\end{align}

 From i), ii) and iii), all results up to now are combined to obtain the calculation result of the three point functions $G_{|a^{1}|a^{2}|a^{3}|}$. 
\begin{align}
&G_{|a^{1}|a^{2}|a^{3}|}\notag\\
=&\left.\frac{iV}{\mathcal{Z}[0]}\frac{\partial^{3}\mathcal{Z}[J]}{\partial J_{a^{1}a^{1}}\partial J_{a^{2}a^{2}}\partial J_{a^{3}a^{3}}}\right|_{J=0}-\frac{iV}{\mathcal{Z}[0]^{2}}\left(\left.\frac{\partial\mathcal{Z}[J]}{\partial J_{a^{1}a^{1}}}\right|_{J=0}\right)\left(\left.\frac{\partial^{2}\mathcal{Z}[J]}{\partial J_{a^{2}a^{2}}\partial J_{a^{3}a^{3}}}\right|_{J=0}\right)\notag\\
&-\frac{iV}{\mathcal{Z}[0]^{2}}\left(\left.\frac{\partial\mathcal{Z}[J]}{\partial J_{a^{2}a^{2}}}\right|_{J=0}\right)\left(\left.\frac{\partial^{2}\mathcal{Z}[J]}{\partial J_{a^{1}a^{1}}\partial J_{a^{3}a^{3}}}\right|_{J=0}\right)-\frac{iV}{\mathcal{Z}[0]^{2}}\left(\left.\frac{\partial\mathcal{Z}[J]}{\partial J_{a^{3}a^{3}}}\right|_{J=0}\right)\left(\left.\frac{\partial^{2}\mathcal{Z}[J]}{\partial J_{a^{1}a^{1}}\partial J_{a^{2}a^{2}}}\right|_{J=0}\right)\notag\notag\\
&+\frac{2iV}{\mathcal{Z}[0]^{3}}\left(\left.\frac{\partial\mathcal{Z}[J]}{\partial J_{a^{1}a^{1}}}\right|_{J=0}\right)\left(\left.\frac{\partial\mathcal{Z}[J]}{\partial J_{a^{2}a^{2}}}\right|_{J=0}\right)\left(\left.\frac{\partial\mathcal{Z}[J]}{\partial J_{a^{3}a^{3}}}\right|_{J=0}\right)\notag\\
=&(iV)\left(\frac{\displaystyle\det_{1\leq p,q\leq N}\left(t_{p}^{q-1}\right)}{A_{N}(z_{1},\ldots,z_{N})}\right)\notag\\
&\Biggl\{F(\{1,2,3\},\emptyset,\emptyset)+F(\emptyset,\{1,2,3\},\emptyset)+F(\emptyset,\emptyset,\{1,2,3\})\notag\\
&+\sum_{i,l,k=1,i\neq l\neq k\neq i}^{3}\Biggl(F(\{i\},\{l\},\{k\})+\frac{F(\{i,l\},\{k\},\emptyset)}{2}+\frac{F(\{i\},\{l,k\},\emptyset)}{2}+\frac{F(\{i,l\},\emptyset,\{k\})}{2}\notag\\
&+\frac{F(\emptyset,\{i,l\},\{k\})}{2}+\frac{F(\{i\},\emptyset,\{l,k\})}{2}+\frac{F(\emptyset,\{i\},\{l,k\})}{2}\Biggl)\Biggl\}\notag\\
&-(iV)\left(\frac{\displaystyle\det_{1\leq p,q\leq N}\left(t_{p}^{q-1}\right)}{A_{N}(z_{1},\ldots,z_{N})}\right)^{2}\sum_{i,l,k=1,i\neq l\neq k\neq i}^{3}\Biggl[\Biggl\{F(\{i\},\emptyset,\emptyset)+F(\emptyset,\{i\},\emptyset)+F(\emptyset,\emptyset,\{i\})\Biggl\}\notag\\
&\Biggl\{F(\{l\},\{k\},\emptyset)+F(\{l\},\emptyset,\{k\})+F(\emptyset,\{l\},\{k\})+\frac{F(\emptyset,\emptyset,\{l,k\})}{2}+\frac{F(\{l,k\},\emptyset,\emptyset)}{2}+\frac{F(\emptyset,\{l,k\},\emptyset)}{2}\Biggl\}\Biggl]\notag\\
&+(2iV)\left(\frac{\displaystyle\det_{1\leq p,q\leq N}\left(t_{p}^{q-1}\right)}{A_{N}(z_{1},\ldots,z_{N})}\right)^{3}\left(\prod_{i=1}^{3}\Biggl\{F(\{i\},\emptyset,\emptyset)+F(\emptyset,\{i\},\emptyset)+F(\emptyset,\emptyset,\{i\})\Biggl\}\right).\label{z}
\end{align}

\end{enumerate}

%%%%%%%%%%%%%%%%%%%%%%%%%%%%%%%%%%%%%%%%%%%%%%%%%%%%%%%%%%%%%%%%%%%%%%%%%%%%%%%%%%%%%%%%%%

\end{document}